\documentclass{article}

\usepackage[utf8]{inputenc}
\usepackage{amsmath}
\usepackage{amsthm}
\usepackage{amssymb}
\usepackage{amsfonts}
\setcounter{tocdepth}{3}
\usepackage{graphicx}
\usepackage{url}
\usepackage{color}
\usepackage{array}
\usepackage{cases}
\usepackage{epsfig}
\usepackage{times}
\usepackage{fullpage}
\usepackage{tikz}
\usepackage{pgf} 
\usepackage{fp}
\usepackage{algorithm}
\usepackage{algorithmic}
\usepackage{comment}
\usepackage{enumitem}
\usepackage{pifont}
\usepackage[mathscr]{euscript}
\usepackage{textcomp,gensymb}
\usepackage[numbers]{natbib}

\begin{document}

\title{Satisfactory Budget Division\thanks{Accepted for AAMAS 2025 as an extended abstract.}}
\author{Laurent Gourv{\`e}s$^{1}$ 
%\thanks{Corresponding author} 
\and Michael Lampis$^1$ \and  Nikolaos Melissinos$^2$\and Aris Pagourtzis$^3$}

\date{{\small 1. Universit\'e Paris-Dauphine, Universit\'e PSL, CNRS, LAMSADE, 75016, Paris, France\\
2. Czech Technical University in Prague, Prague, Czech Republic\\
3. National Technical University of Athens \& Archimedes/Athena RC, Greece\\
\texttt{  laurent.gourves@lamsade.dauphine.fr, michail.lampis@lamsade.dauphine.fr, nikolaos.melissinos@fit.cvut.cz, pagour@cs.ntua.gr}
}
}

%%% Include any author-defined commands here.
         
%\newcommand{\BibTeX}{\rm B\kern-.05em{\sc i\kern-.025em b}\kern-.08em\TeX}

\def\dell{{\sf d}}
\def\D{{\sf D}}
\def\cost{\textbf{cost}}
\def\AAS{{\sc all-agents-sat}}

\setlength\extrarowheight{5pt}

\newcommand{\xmark}{\ding{55}}%
\newcommand{\cmark}{\ding{51}}%

\newcommand{\commentLaurent}[1]{\textcolor{blue}{(L: #1)}}

\newcommand{\commentAris}[1]{\textcolor{magenta}{(AP: #1)}}

\newtheorem{instance}{Instance}
\newtheorem{definition}{Definition}
\newtheorem{observation}{Observation}
\newtheorem{theorem}{Theorem}
\newtheorem{lemma}{Lemma}
\newtheorem{proposition}{Proposition}

%%%%%%%%%%%%%%%%%%%%%%%%%%%%%%%%%%%%%%%%%%%%%%%%%%%%%%%%%%%%%%%%%%%%%%%%

\maketitle 

%%%%%%%%%%%%%%%%%%%%%%%%%%%%%%%%%%%%%%%%%%%%%%%%%%%%%%%%%%%%%%%%%%%%%%%%

\begin{abstract}
A divisible budget must be allocated to several projects, and agents are asked for their opinion on how much they would give to each project. 
We consider that an agent is satisfied by a  division of the budget if, for at least a certain predefined number $\tau$ of projects, the part of the budget actually allocated to each project is at 
least as large as the amount the agent requested. 
The objective is to find  
a budget division that ``best satisfies'' the agents.  
In this context, different problems can be stated and we  
address the following ones. We study  $(i)$ the largest proportion of agents that can be satisfied for any instance, $(ii)$  classes of instances admitting a budget division that satisfies all agents, $(iii)$ the complexity of deciding if, for a given instance, every agent can be satisfied, and finally $(iv)$ the question of finding, for a given instance, the smallest total budget 
to satisfy all agents.  
We provide answers to these complementary questions for several natural  
values of the parameter $\tau$, capturing scenarios where we seek to satisfy for each agent all; almost all; half; or at least one of her requests.

\end{abstract}

%\begin{center}
%{\bf Keywords: } Budget Division; Computational Social Choice
%\end{center}

\section{Introduction}

For many years, social choice has concerned the AI community, particularly for the computational questions that it generates \cite{BCELP2016,UE17,JR2024}. The central question in computational social choice is to take into account the individual preferences of several agents when developing a compromise solution. The main topics of this field are voting methods to elect representatives (e.g., committees), fair allocation of (possibly indivisible) goods and chores, or the partitioning of agents into stable subgroups (e.g., matchings, hedonic games). In addition to these fundamental topics which are now widely discussed in the literature, significant interest has recently arisen around collective budget issues \cite{BogomolnaiaMS05,ABM20,AS21,MPS20,AACKLP23,BGPSW24,FPPV21,CGL21,CGL22,WagnerM23,BRANDL2022102585,EST23,GoyalSSG23,CCP24,FS24}. Numerous models have been proposed and studied, including the now famous and well-studied \emph{participatory budgeting} \cite{Cabannes2004,de2022international}. Budgeting problems 
address the recurring question of how to properly use a common budget for funding a given set of projects. This article is part of this vibrant trend. Its aim is to contribute, using a new concept of agent satisfaction, to knowledge on the existence and computation of an acceptable collective budget.

\subsection{The Model}

We are given a perfectly divisible %common 
budget of (normalized) value 1,  $m$ projects, 
and a set of agents $N=[n]$.\footnote{For every positive integer $k$, $[k]$  denotes the set $\{1, \ldots,k\}$.}  
Each agent $i \in N$ has reported a demand $\dell^i_j\in [0,1]$ for every project $j \in [m]$. These quantities are opinions on how to spend the budget, i.e., agent $i$  would devote $\dell^i_j$ to the project $j$ if she was the only decision maker. 

Let us clarify the semantics of the demands: a small (resp., large) demand does not mean that the agent considers the associated project to be of no (resp., great) interest. On the contrary, a demand of $\dell^i_j$ means that after an investment of (at least) $\dell^i_j$ on project $j$, agent $i$ judges the status of the project $j$ to be satisfactory. Therefore, a small (resp., large) demand from an agent means that she is quite satisfied (resp., not very satisfied) with the current state of the project and that a small (resp., large) part of the budget would make it acceptable.

A solution (a.k.a. budget division)  ${\bf x}$ is an element of $[0,1]^m$, and ${\bf x}$ is said to be  budget-feasible if $\sum_{j=1}^m x_j \le 1$. Here, $x_j$ is the $j$-th coordinate of ${\bf x}$ and it indicates how much of the common budget is actually spent on project $j$. We suppose that 
\begin{equation} \label{mass_l}
\sum_{j=1}^m \dell^i_j \le 1, \quad \forall i \in N
\end{equation}
holds in order to express that the demands of every fixed agent are compatible with a feasible division of the budget.

The model has many applications. It captures situations where an organization (e.g., a library, a city council, a company, a university, etc.) invests in different projects (or activities, topics, facilities, etc.) and the 
problem is to divide the budget in a way that satisfies as much as possible the members of the organization.

Our approach is to consider that an agent $i$ is {\em locally satisfied} by ${\bf x}$ for a given project $j$ if  $x_j \ge \dell^i_j$, i.e., enough resources are put on the project from the agent's perspective (and investing more than the agent's demand is not harmful). More globally, an agent is said to be {\em satisfied}  (or {\em covered}) by ${\bf x}$ if she is locally satisfied for at least $\tau$ projects, where the threshold $\tau \in [1,m]$ is a parameter belonging to the problem's input. Thus, a solution can satisfy several agents by satisfying them locally, possibly for different projects. %, especially if $\tau$ is not too close to $m$. 

%In any case, we consider that investing more than an agent's demand in a given project is not harmful according to the agent. 

%Here, we consider that $\dell^i_j$ is the minimum quantity of resource to allocate to project $j$ for agent $i$ to approve it.

%Before going further in the description of the problem, let us illustrate it 

Let us illustrate the setting with an instance in which the demands are gathered in an $n \times m$ matrix $\D$ where the entry at line $i$ and column $j$ is equal to $\dell^i_j$.
\begin{instance} \label{instance_nocover} A multimedia library has 3 kinds of documents (book, DVD, and record), and   
4 employees (Alice, Bob, Carl, and Diana) who made the following demands concerning the purchase of new items. 
%\begin{equation} \label{nocover}
$$\left( \begin{array}{ccc}
0.5 & 0.5 & 0 \\
0 & 0.5 & 0.5 \\
0.6 & 0.1 & 0.3 \\
0.3 & 0.1 & 0.6 
\end{array} \right)$$
%\end{equation}
Alice and Bob agree that at least half of the budget should be devoted to new DVDs. Alice thinks that the other $50\%$ should be spent on new books, and nothing for records because there are enough albums on the shelves. However, Bob believes that the rest should be invested on new records, and nothing for books because there are quite enough books. Carl's opinion is to spend $60\%$ of the budget on new books, $10\%$ on new DVDs, and $30\%$ on new records. Finally, Diana prefers to devote $30\%$ of the budget on new books, $10\%$ on new DVDs, and $60\%$ on new records.   

Suppose $\tau=2$ and ${\bf x}=(0.3,0.6,0.1)$. Alice is satisfied by ${\bf x}$ because after the purchase of new items, the outcome meets her expectations concerning the DVD and record sections. Bob and Diana are also satisfied because enough money is invested on new books and DVDs. However Carl is not satisfied by ${\bf x}$ because even after the purchase of new items, the sections of books and records are below his expectations.
\end{instance}

%Going back to Instance \ref{instance_nocover} in which  $m=3$, the majority and all-but-one scenarios both require to locally satisfy agents for at least $2$ projects, and it is not difficult to see that it is impossible to satisfy all the agents (see the details in  Section \ref{sec:sm:1}). 

A typical instance of the proposed budget division problem contains multiple agents who have heterogeneous demands for the projects. Then, what solution do we decide to implement, seeking to best satisfy agents? The aim of this article is to provide several complementary approaches to answer this question.

It is often impossible to satisfy all agents, as in  Instance \ref{instance_nocover} when  $\tau=2$ (see the justification in the appendix,  Section \ref{sec:sm:1}). Of course, the parameter $\tau$ plays a central role in this matter: the larger $\tau$ is, the more constraints we impose on ${\bf x}$ %the solution 
so that it satisfies an agent. 
 We mainly consider four  values of $\tau$ corresponding to four scenarios: $1$, $m/2$, $m-1$, and $m$. The cases  $\tau=m/2$ and $\tau=m-1$ are termed \emph{at-least-half scenario} 
and \emph{all-but-one scenario}, respectively. The other scenarios deal with extreme values of the threshold: $\tau \in \{1,m\}$. 

The values of $\tau$ that we consider 
range from 1, which is very undemanding, to $m$, which is very demanding. 
%It seems natural to us to consider these two extreme values. 
A positive result for the demanding case $\tau=m$ is valuable, but if it is out of reach, then maybe the problem is amenable to a small relaxation; that is why we also consider $\tau=m-c$ where $c$ is constant.\footnote{Relaxations up to a constant are typical in fair division.} The choice of $\tau=m/2$ corresponds to an intermediate case between $1$ and $m$, where it is natural and reasonable to admit that an agent is globally satisfied if she is locally satisfied for a \emph{majority} of projects. %This notion of majority is standard and seems very natural to us. We hope the reviewers share our opinion that the selected values of tau are meaningful and natural.  

\subsection{Contributions and Organization}

We address the following complementary questions and provide the answers indicated. 

\begin{itemize}
\item

What proportion of agents can be satisfied for {\em any} instance? Our findings, given in Section \ref{sec:fraction} and summarized in Table \ref{tab_rho_HALF}, are bounds on the largest  fraction of agents that can be satisfied in any instance, for every scenario.

\item

Which classes of instances admit a solution that satisfies all agents? For every scenario we characterize in Section \ref{sec:sat_all_agents} the values of $(n,m)$ for which every instance with $n$ agents and $m$ projects admits a budget-feasible solution satisfying all agents. The results, given in Section \ref{sec:sat_all_agents}, are summarized in Tables \ref{tab_HALF} and \ref{tab_ABO} for $\tau \in \{m/2,m-1\}$.

\item   
For a given instance, what is the difficulty of deciding whether all agents can be satisfied? 
%How hard is it to decide if there exists a budget-feasible solution satisfying all agents?
%We conclude this article 
%with an analysis of the complexity of maximizing social welfare. 
We show in Section \ref{sec:social_welfare_max} that the problem's complexity has a somewhat counterintuitive behavior. %concerning the sum of demands. 
If we allow $\sum_{j \in [m]} \dell^i_j <1$, then the problem is strongly NP-complete for $\tau=m-1$, but if we have $\sum_{j \in [m]} \dell^i_j =1$, then the problem can be resolved in pseudopolynomial time.  Surprisingly, the subtle distinction only plays a role when $\tau=m-1$ because the problem is shown strongly NP-complete for $\tau=m-c$ and all constant $c\ge 2$. 

\begin{comment}
Finally, finding a solution ${\bf x}$ that maximizes the number of agent-project pairs $(i,j)$ such that $x_j \ge \dell_j^i$ (corresponding to a utilitarian optimum) can be done in polynomial time by dynamic programming. 
\end{comment}

\item What is the smallest total budget 
(possibly smaller or larger than $1$) 
to satisfy all agents? Our results, given in Section \ref{sec:min_tot_bud} and summarized in Table \ref{tab_total_budget}, follow two approaches: answering the question for a single instance or for an entire class of instances.

\end{itemize}

Throughout the article, we will often use the fact that one can suppose %without loss generality 
w.l.o.g. that all the coordinates of a solution ${\bf x}$ appear in the agents'  demands (with the possible exception of zeros).

\begin{observation} \label{obs2} For all $\tau \in [1,m]$, from a solution ${\bf x}$ satisfying a set of agents $S \subseteq N$, % of a given instance, 
one can build another solution ${\bf z}$ satisfying $S$ as well, such that $z_j \in \{0\} \cup \{\dell^i_j: i\in N\}$ for every $j \in [m]$, and $\sum_{j=1}^m z_j \leq  \sum_{j=1}^m x_j$.   
\end{observation}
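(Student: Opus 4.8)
The plan is to build $\mathbf{z}$ by rounding each coordinate of $\mathbf{x}$ \emph{down} to the nearest admissible value, treating the projects independently. Concretely, for every project $j \in [m]$ I would set
\[
z_j := \max\Bigl(\{0\} \cup \{\dell^i_j : i \in N,\ \dell^i_j \le x_j\}\Bigr),
\]
i.e.\ $z_j$ is the largest value among $0$ and the agents' demands for project $j$ that does not exceed $x_j$. By construction $z_j \in \{0\} \cup \{\dell^i_j : i \in N\}$ and $0 \le z_j \le x_j \le 1$, so $\mathbf{z} \in [0,1]^m$ is a genuine solution and, summing over $j$, we immediately obtain $\sum_{j=1}^m z_j \le \sum_{j=1}^m x_j$. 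This already settles the budget requirement and the admissible-coordinate requirement simultaneously.

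The only thing left to check is that every agent in $S$ remains satisfied, and this is where the choice of $z_j$ as the maximal admissible value below $x_j$ does the work. Fix $i \in S$ and a project $j$ for which $i$ is locally satisfied by $\mathbf{x}$, that is, $x_j \ge \dell^i_j$. Then $\dell^i_j$ is itself one of the candidates in the maximum defining $z_j$, whence $z_j \ge \dell^i_j$ and $i$ is still locally satisfied for $j$ under $\mathbf{z}$. Thus the set of projects on which agent $i$ is locally satisfied can only grow when passing from $\mathbf{x}$ to $\mathbf{z}$; in particular it still has size at least $\tau$, so $i$ is satisfied by $\mathbf{z}$. Since $i \in S$ was arbitrary, $\mathbf{z}$ satisfies all of $S$.

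I do not expect any real obstacle here: the statement is essentially a downward-closure (monotonicity) property, namely that shrinking a coordinate to a value still dominating every demand it previously dominated cannot destroy local satisfaction. The only point requiring a moment's care is to round \emph{down} rather than to an arbitrary nearby demand value — rounding up could overspend, while rounding down to strictly below some $\dell^i_j$ with $x_j \ge \dell^i_j$ could unsatisfy an agent; taking the largest admissible value at most $x_j$ avoids both pitfalls at once. Note finally that the argument is completely uniform in $\tau$, which is exactly why the observation holds for all $\tau \in [1,m]$.
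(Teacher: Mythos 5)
Your proposal is correct and follows essentially the same route as the paper: round each coordinate down to the largest value in $\{0\} \cup \{\dell^i_j : i \in N\}$ not exceeding $x_j$, then observe that no pair with $\dell^i_j \le x_j$ can become unsatisfied. Your single max-formula even subsumes the paper's two-case presentation (the case $x_j$ already admissible is just the maximum equalling $x_j$), so there is nothing to add.
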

\begin{proof} For each $j \in [m]$, proceed as follows. If $x_j \in \{0\} \cup \{\dell^i_j: i\in N\}$, then $z_j=x_j$. Otherwise, $z_j$ is given the largest value in $\{0\} \cup \{\dell^i_j: i\in N\}$ which does not exceed $x_j$. Thus, it cannot be $z_j < \dell_j^i \le x_j$ for any player-project pair $(i,j) \in N \times [m]$, meaning that players satisfied by ${\bf x}$ are also satisfied by ${\bf z}$.      
\end{proof}

\subsection{Related Work}

Since the early work of Bogolmonaia et al. \cite{BogomolnaiaMS05}, and later addressed in \cite{ABM20}, many articles have been written on allocating a divisible budget to a given set of projects. The problem has many applications: how much time (resp., space) 
is given to various topics in a given conference (resp., textbook),  which voting power is given to the parties composing a parliament, how much money is allocated to different charities by a company, etc.

Budget divisions are constructed on the basis of preferences expressed by some agents (a.k.a. voters). Depending on the context, the format used to express preferences and the way they are aggregated vary. Also, the decision about every given project can be binary (fund it entirely or not at all), or it can be a real %amount 
within a given range (e.g., allocating a percentage of the total budget).     
Let us review some previous related works that fall into the collective budget framework.   

%There are many articles dealing with budgets that need to satisfy a group of people. 

In the \emph{portioning} problem \cite{MPS20,FPPV21,EST23,AACKLP23,GoyalSSG23,CCP24,FS24}, we are given a perfectly divisible budget of one unit to be spent on a set of $m$ projects, and some information about the preferences of $n$ voters. A solution is a vector ${\bf x} \in [0,1]^m$ satisfying $\sum_{j=1}^m x_j=1$. An \emph{aggregation mechanism} (mechanism in short) is a function whose input and output are the preferences of the voters and a solution, respectively.

In the seminal article of Freeman et al. \cite{FPPV21} on portioning, every agent $i \in [n]$ declares a score vector $\textbf{s}^i=(s^i_1, \ldots,s^i_m)$ such that $\sum_{j=1}^m s^i_j=1$. The cost of agent $i$ for the solution ${\bf x}$  is the L1-distance between $\textbf{x}$ and $\textbf{s}^i$, i.e.,  $dist_1(\textbf{x},\textbf{s}^i):=\sum_{j \in [m]} |s^i_j - x_j|$. The authors are interested in \emph{strategyproof} mechanisms for which no voter  should be able to change the output in her favor by misreporting her score vector. They give a class of strategyproof mechanisms named \emph{moving phantom} which leverages Moulin's method \cite{moulin1980strategy}. Within this class, they identify a specific mechanism, called \emph{independent markets}, which satisfies an extra fairness constraint named \emph{proportionality}.

In \cite{EST23}, Elkind et al. study the independent markets mechanism and two families of mechanisms from an axiomatic viewpoint. The first family is said to be coordinate-wise, in the sense that it produces a solution $\textbf{x}$ such that $x_j$ is a function of the multiset $\{s^i_j \mid i \in [n]\}$ for each $j \in [m]$. The second family defines $\textbf{x}$ as an optimum of a social welfare function. The desirable axioms under consideration include strategyproofness, proportionality, monotonicity and Pareto optimality, to name a few.  

Caragiannis et al. \cite{CCP24} also consider the portioning problem. The focus is on strategyproof mechanisms whose L$_1$-distance between their output and the mean vector $(\sum_{i\in [n]}s^i_j/n)_{j \in [m]}$ should be upper bounded by some quantity $\alpha$. They give strategyproof mechanisms such that $\alpha \leq 1/2$ when $m=2$, and $\alpha \leq 2/3+\epsilon$ when $m=3$. Going further, Freeman and Schmidt{-}Kraepelin \cite{FS24} proposed to also consider the L$_\infty$-distance, as a measure of fairness between the projects \cite{FS24}.

In the article by Michorzewski et al. \cite{MPS20} on portioning, 
voters have separately declared the subset of projects that they approve. The utility of some voter $i$ for the solution ${\bf x}$ is $u_i({\bf x})=\sum_{j\in A_i} x_j$ where $A_i \subseteq [m]$ is the set of projects approved by voter $i$. The authors consider the worst-case deterioration of the utilitarian social welfare of various (probabilistic) voting rules (e.g., Nash social welfare maximization) and the fairness axioms (e.g., individual fair share) that they guarantee.

In the work of Airiau et al. \cite{AACKLP23} on portioning, every agent  has ranked the projects by order of preference. Each ranking is converted into scores for the projects (using, for example, plurality, veto, or Borda), and the scores are used for evaluating budget divisions. Then, a solution maximizing the social welfare is sought, considering  various notions of social welfare: utilitarian, egalitarian, or Nash. This process gives rise to various {\em positional social decision schemes}. The authors have analyzed and evaluated the %computational 
complexity of those schemes, together with the fairness axioms that they possibly satisfy. % \cite{AACKLP23}.        

In \cite{GoyalSSG23}, Goyal et al. study the distortion of randomized mechanisms with low sample-complexity 
for the portioning problem where the social cost of an output $\textbf{x}$ is $\sum_{i=1}^n dist_1(\textbf{x},\textbf{s}^i)$. Their main result is a mechanism which uses 3 votes, the distortion of which is between $1.38$ and $1.66$.

Brill et al. \cite{BGPSW24} consider elections in which voters express approval votes over parties and a given number of seats must be distributed among the parties. This model is related to the previous budget division problems (the budget is the total number of seats, and the parties are the projects), with the specificity that parties get an integral number of seats.

In the model introduced and studied by Wagner and Meir \cite{WagnerM23}, every agent $i$ expresses an amount $t_i$ for modification of an initial budget $B_0$, namely agent $i$ prescribes a budget of $B_0+n t_i$ where $t_i >-B_0/n$. On top of that, every agent declares an allocation of the resulting budget on $m$ given projects. Wagner and Meir show that one can construct a VCG-like mechanism in this setting, preventing any agent's incentive to report false declarations. Another incentive-compatible mechanism has been proposed in a slightly different setting where there is no single exogenous budget, but each agent $i$ has her own budget $B_i$ \cite{BRANDL2022102585}. The problem is to determine the agents’ contributions, and %properly 
distribute them on $m$ given projects while taking into account the agents utilities.

As mentioned previously, participatory budgeting is a %well-studied 
paradigm in which citizens can propose projects of public interest (e.g., refurbishing a school) and then vote to determine which of these projects are actually financed with public funds \cite{AS21,de2022international}. Projects are indivisible, meaning they are either fully funded or not funded at all. Similarly, in the model of Cardi et al., there is a common budget to be used for funding the ``private'' projects submitted by some agents  \cite{CGL21,CGL22}. A valid solution is a subset of projects whose total cost does not exceed the budget, and each agent's utility for a given solution is equal to the cost of the accepted projects submitted by her. In this setting, various notions of fairness and efficiency, together with their possible combination, have been addressed.

The previously mentioned works share important similarities with the model studied in this article. However, the specificity of our model is mainly based on the way of satisfying the agents, making our results not directly comparable with those of the literature. Indeed, previous works assume that every agent $i$ has a cardinal (dis)-utility with respect to a solution. In the present article, the agents have two possible statuses, depending on whether they are satisfied or not by the outcome. In addition, in contrast with previous works, we assume that each agent is equally interested in \emph{all} projects, and demands represent additional resources needed to make the status of existing projects acceptable (according to the agent's opinion). If an agent is satisfied with the current state of a project, then no need to devote extra resources from her point of view. On the contrary, if an agent is not satisfied by the state of a project, then some efforts are necessary and she would allocate new resources to it in order to make it acceptable. Thus, the objective of the present work is to reach global satisfaction of an agent after the extra resources from the budget are allocated.

\section{Guaranteed Fraction of Satisfied Agents} \label{sec:fraction}

We are interested in determining the largest proportion of the whole set of agents that can be satisfied with a single budget-feasible solution, in any case. The objective is to give lower and upper bounds on $\rho \in (0,1]$ such that \emph{every} instance with $n$ agents and $m$ projects admits a solution ${\bf x}$, possibly dependent on the instance, which satisfies at least $\rho n$ agents, and $\sum_{j=1}^m x_j \le 1$. Note that since in this section (and Section \ref{sec:sat_all_agents}) we are interested in proving bounds guaranteed to apply to all instances, we can without loss of generality focus on instances where the sum of the demands of each agent sum up to exactly $1$. Our results for each $\tau$ in $\{1,m/2,m-1,m\} $ are summarized in Table \ref{tab_rho_HALF}.

%\commentLaurent{We may skip the next paragraph in order to save space.}

\begin{comment}
For the least demanding case ($\tau=1$) it is not difficult to satisfy all agents ($\rho=1$). For the \emph{at-least-half}  scenario ($\tau=m/2$), a constant fraction  of the agents can always be satisfied ($\frac{1}{2}+\frac{1}{2n} \leq \rho \leq  \frac{2}{3}+\frac{1}{n}$).  For large values of $\tau$ (namely $\tau \in \{m-1,m\}$), $\rho$ is of order of $1/n$.    
\end{comment}

\begin{table}
\begin{center}
\begin{tabular}{l|cccc}
$\tau$&$1$&$m/2$&$m-1$&$m$\\
\hline
$\rho$&1&$[\frac{1}{2}+\frac{1}{2n}, \frac{2}{3}+\frac{1}{n}]$&$\Theta(1/n)$& $1/n$\\
\end{tabular}
\caption{\label{tab_rho_HALF} Satisfying $\rho n$ agents}
\end{center}
\end{table}

%Instance \ref{instance_nocover} shows that $\rho \le 3/4$ when $m=3$. This section contains a better upper bound of $2/3$ (Theorem \ref{upper_bound}) and a lower bound of $1/2$ (Theorem \ref{thm_dict}) achieved with a simple, efficient algorithm. 

\subsection{At-Least-Half Scenario}

\label{sec:wcf:half}

%Let us first suppose that $\tau=m/2$, meaning that an agent is satisfied if she is locally satisfied for at least half of the projects.  

%We are interested in determining the best proportion of the whole set of agents that can be satisfied with a single solution, in any case. The objective is to give lower and upper bounds on $\rho \in (0,1]$ such that every instance with $n$ agents and $m$ projects admits a solution which satisfies at least $\rho n$ agents. Instance \ref{instance_nocover} shows that $\rho \le 3/4$ when $m=3$. This section contains a better upper bound of $2/3$ (Theorem \ref{upper_bound}) and a lower bound of $1/2$ (Theorem \ref{thm_dict}) achieved with a simple, efficient algorithm. 

%the rule is to following the recommendations of a person called a dictator

%In this section we suppose that $\tau=m/2$. 

In this section, we assume that $\tau$ is equal to $m/2$. A well-studied rule in social choice  consists in 
following the recommendations of a single person called a  \emph{dictator}. Here, we assume that the {\sc dictator} rule outputs the agent whose demand vector satisfies a maximum number of agents (break ties arbitrarily). 

Our analysis relies on the following notion of coverage.

\begin{definition} \label{def_cov}
An $m$-vector ${\bf x}$ $\tau$-\emph{covers} another $m$-vector ${\bf y}$ when $x_j \ge y_j$ holds for at least $\tau$ 
distinct coordinates $j$.
\end{definition}

\begin{theorem} \label{thm_dict} When  $\tau=m/2$, every instance admits a dictator whose demand vector satisfies at least $\frac{n+1}{2}$ agents, and this lower bound is tight.
\end{theorem}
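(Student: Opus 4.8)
The plan is to recast satisfaction as the covering relation of Definition \ref{def_cov} and then run a counting argument on the resulting tournament. Writing $\dell^i$ for the demand vector of agent $i$, taking the dictator's vector as the solution ${\bf x}=\dell^k$ satisfies agent $i$ exactly when $\dell^k$ $\tau$-covers $\dell^i$. The structural fact I would establish first is a pairwise lemma: for $\tau=m/2$ and any two agents $i,k$, at least one of $\dell^i,\dell^k$ $\tau$-covers the other. To see this, partition the coordinates into $A=\{j:\dell^k_j>\dell^i_j\}$, $B=\{j:\dell^k_j<\dell^i_j\}$ and $E=\{j:\dell^k_j=\dell^i_j\}$. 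Then $\dell^k$ $\tau$-covers $\dell^i$ iff $|A|+|E|\ge m/2$, while $\dell^i$ $\tau$-covers $\dell^k$ iff $|B|+|E|\ge m/2$, and since $(|A|+|E|)+(|B|+|E|)=m+|E|\ge m$, at least one inequality must hold. Note this is purely combinatorial and does not even use the budget constraint \eqref{mass_l}.

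Next I would convert this into the lower bound by averaging. Form the directed graph on $N$ with an arc $k\to i$ whenever $\dell^k$ $\tau$-covers $\dell^i$; every vertex carries a self-loop, since $\dell^k_j\ge\dell^k_j$ on all $m\ge\tau$ coordinates, and by the pairwise lemma every unordered pair $\{i,k\}$ contributes at least one arc between its endpoints. Hence the number of arcs joining distinct vertices is at least $\binom{n}{2}$, so $\sum_k \mathrm{outdeg}(k)\ge\binom{n}{2}$ and some agent $k$ has out-arcs to at least $\binom{n}{2}/n=(n-1)/2$ other agents. Counting its self-loop, $\dell^k$ satisfies at least $(n-1)/2+1=(n+1)/2$ agents; since the dictator is by definition an agent maximizing the number of satisfied agents, the dictator satisfies at least $(n+1)/2$ agents.

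For tightness I would exhibit an instance in which every demand vector covers exactly $(n+1)/2$ agents, i.e.\ the covering tournament is regular with no mutual coverage. Take $n=2t+1$ odd and build a circulant instance: group the coordinates into $n$ identical pairs, so $m=2n$ is even and $\tau=m/2=n$, and set the value of agent $i$ on the pair indexed by $c\in\{0,\dots,n-1\}$ to $(i-c)\bmod n$. A short computation then shows that agent $a$ strictly exceeds agent $b$ on exactly $2(n-d)$ coordinates, where $d=(a-b)\bmod n$, so $\dell^a$ $\tau$-covers $\dell^b$ iff $2(n-d)\ge n$, i.e.\ iff $d\le t$; thus each vector covers itself together with exactly $t=(n-1)/2$ others, for a total of $(n+1)/2$. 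Every row has the same sum $n(n-1)$, so scaling all entries by $1/(n(n-1))$ keeps every entry in $[0,1]$, makes every row sum to $1$, and preserves all coordinate-wise comparisons and hence the covering relation.

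I expect the tightness construction to be the main obstacle. The lower bound is essentially a two-line counting argument once the pairwise lemma is in place, whereas tightness requires simultaneously realizing a regular tournament as a coordinate-majority relation, avoiding the ties or mutual coverage that would otherwise inflate some out-degree past $(n-1)/2$, and respecting $\dell^i\in[0,1]^m$ with $\sum_j\dell^i_j\le 1$. The circulant, together with the pairing-to-get-even-$m$ and the final rescaling, is precisely what reconciles these requirements and pins the guarantee at exactly $(n+1)/2$.
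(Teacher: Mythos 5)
Your proof is correct, and its lower-bound half follows the paper's own route: the same pairwise lemma (for $\tau=m/2$, one of any two demand vectors $\tau$-covers the other) followed by the same averaging over covering arcs, giving an agent whose vector covers at least $\frac{n-1}{2}$ others plus herself. Your partition into $A$, $B$, $E$ actually makes the lemma's bookkeeping cleaner than the paper's phrasing, which negates ``covers'' somewhat loosely, while you track the tie set $E$ explicitly and note that no budget constraint is needed. The genuine difference is the tightness instance. The paper takes $m=n\ge 3$ odd and uses cyclic shifts of the geometric vector $(\frac{1}{2}+\frac{1}{2^m},\frac{1}{2^2},\dots,\frac{1}{2^m})$, so that agent $1$ covers agent $i$ exactly when $i\ge (m+3)/2$, for a total of $\frac{n+1}{2}$ covered agents. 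You instead realize a regular tournament by a circulant of residues $(i-c)\bmod n$ placed on $n$ duplicated coordinate pairs (so $m=2n$ and $\tau=n$), then rescale by $\frac{1}{n(n-1)}$. Your computation checks out: agent $a$ strictly exceeds agent $b$ on exactly $2(n-d)$ coordinates where $d=(a-b)\bmod n$, there are no ties between distinct agents, and $2(n-d)\ge n$ iff $d\le t$, so every demand vector covers exactly $t+1=\frac{n+1}{2}$ agents, which is precisely what tightness requires. What your construction buys is tie-freeness and identical row sums, making the normalization and the covering count elementary modular arithmetic; what it costs is doubling the number of projects, whereas the paper keeps $m=n$ at the price of a more delicate comparison of powers of $\frac{1}{2}$. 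Both constructions handle only odd $n$ directly, so neither is more general than the other on that front.
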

\begin{proof} Let us first give an intermediate result.

\begin{lemma} \label{lem_pair} When $\tau=m/2$, for any positive number of projects, and any pair of agents, the demand vector of one agent of the pair $\tau$-covers the demand vector of the other agent of the pair.  
\end{lemma}

\begin{proof} Let $(\dell^1_1, \ldots, \dell^1_m)$ and $(\dell^2_1, \ldots, \dell^2_m)$ be the demand vectors of the agents. If $(\dell^1_1, \ldots, \dell^1_m)$ covers $(\dell^2_1, \ldots, \dell^2_m)$, then there exists an index set $J$ such that $|J|\ge m/2$ and $\dell^1_j \ge \dell^2_j$ holds for all $j \in J$. If $(\dell^1_1, \ldots, \dell^1_m)$ does not cover $(\dell^2_1, \ldots, \dell^2_m)$, then for every index set $J$ such that $|J|\ge m/2$, $\dell^1_j < \dell^2_j$ holds for all $j \in J$, which means that $(\dell^2_1, \ldots, \dell^2_m)$ covers $(\dell^1_1, \ldots, \dell^1_m)$. \end{proof}

Consider a complete undirected graph on $n$ vertices where every vertex is an agent.  For every edge $(i,j)$ in the graph, one point is given to $i$ when the demand vector of $i$ $\tau$-covers the demand vector of $j$. Using Lemma \ref{lem_pair}, at least one point is distributed for every edge. Thus, a total of at least $\frac{n(n-1)}{2}$ points is given, and there must be one agent in the graph, say $d$, whose number of points is at least $\frac{n-1}{2}$. Since an agent always $\tau$-covers herself, the demand vector of $d$ $\tau$-covers at least $\frac{n-1}{2}+1=\frac{n+1}{2}$ agents in total.      

%The fact that $\frac{n+1}{2}$ is a tight lower bound for the {\sc dictator} rule is given in the appendix (Section  \ref{tighness_sm}).  

To conclude, let us show that the lower bound on $\rho$ of $\frac{1}{2}+\frac{1}{2n}$ for the {\sc dictator} rule is tight.
We start by selecting, arbitrarily, an odd number $m = n \ge 3$. Then, we consider the following instance.
\begin{equation} \label{example_tight_dictator} \left( \begin{array}{ccccc}
\frac{1}{2}+ \frac{1}{2^m} & \frac{1}{2^2} & \ldots & \frac{1}{2^{m-1}} & \frac{1}{2^m} \\
\frac{1}{2^m} & \frac{1}{2}+ \frac{1}{2^m} & \ldots & \frac{1}{2^{m-2}} & \frac{1}{2^{m-1}} \\
\cdots & \cdots & \ldots & \cdots & \cdots \\
\frac{1}{2^2} & \frac{1}{2^3} & \ldots & \frac{1}{2^m} & \frac{1}{2} + \frac{1}{2^m} 
\end{array} \right)
\end{equation}
We will show that there is no agent whose demand vector $\tau$-covers more than $1/2 + 1/2n$ demand vectors.
Due to the symmetry of the instance, we can assume that the dictator is the first agent. Notice that the demand vector of the $i^{\text{th}}$ agent, where $i \neq 1$, is $\tau$-covered by the demand vector of the $1^{\text{st}}$ agent if and only if $i \ge (m+3)/2$. Indeed, for any $i \in {2, \ldots, m}$, we have that:
\begin{align*}
\text{for any project } j\ge i: &\ d_j^i \ge (1/2)^{j-i+1} > (1/2)^{j}= d_j^1, \text{ and} \\
\text{for any project } j < i: &\
d_j^i = (1/2)^{m-(i-j)+1} < (1/2)^{j}= d_j^1
\end{align*}
Therefore, the $i^{\text{th}}$ agent, where $i > 1$, is satisfied if and only if $i - 1 \ge m/2$. Thus, the demand vector of the first agent satisfies herself and every agent $i$, where $i \ge (m+3)/2$ (since $m$ is odd).
Therefore, $\rho =\big( m-(\frac{m+1}{2})+1\big)/m = \frac{1}{2} + \frac{1}{2m} = \frac{1}{2} + \frac{1}{2n}$.

The previous instance can be extended so that the tightness holds shown without imposing that $n=m$ and $m$ is odd. 
\end{proof}

Thus, the {\sc dictator} rule satisfies at least 
$\frac{n+1}{2}$ agents, indicating that $\rho \ge \frac{n+1}{2n}=\frac{1}{2}+\frac{1}{2n}$ for all number of projects $m$. 
This 
means that one can always satisfy a majority of the population when $\tau=m/2$.  
Now we turn our attention to the upper bound on $\rho$. 

\begin{theorem} \label{upper_bound} 
When $\tau=m/2$, there exists an instance for which no 
solution  
satisfies more than $2n/3+1$ agents.  
\end{theorem}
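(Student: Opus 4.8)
The plan is to exhibit an explicit instance (and then a scaled family) whose structure forces any budget-feasible solution into a tiny set of candidate points, each of which fails a constant fraction of the agents. For the base case I would take $m=3$, so that $\tau=\lceil m/2\rceil=2$ (an agent is covered when locally satisfied on at least two of the three projects, the same ``at least half'' convention used for odd $m$ in the {\sc dictator} analysis), and the $6\times 3$ instance
$$\left(\begin{array}{ccc} 0.5 & 0.5 & 0\\ 0.5 & 0 & 0.5\\ 0 & 0.5 & 0.5\\ 0.6 & 0.4 & 0\\ 0.4 & 0 & 0.6\\ 0 & 0.6 & 0.4\end{array}\right),$$
partitioned into a \emph{core} (the first three ``two-halves'' agents) and a \emph{perturbation layer} (the last three).

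First I would analyse the core. Each of the first three agents has exactly one zero demand, so being locally satisfied on that coordinate is free, and covering the agent at $\tau=2$ reduces to meeting at least one of its two $0.5$-demands. The three resulting disjunctive constraints behave like the three edges of a triangle on the vertex set $\{x_1\ge 0.5,\ x_2\ge 0.5,\ x_3\ge 0.5\}$, so hitting all of them forces at least two coordinates of ${\bf x}$ to be $\ge 0.5$; the budget constraint $\sum_j x_j\le 1$ then forces those two coordinates to equal exactly $0.5$ and the third to be $0$. Hence any solution covering the whole core must be one of the three extreme points $(0.5,0.5,0)$, $(0.5,0,0.5)$, $(0,0.5,0.5)$.

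Next I would test these three candidates against the perturbation layer by direct substitution, checking that $(0.5,0.5,0)$ misses only agent $6$, that $(0.5,0,0.5)$ misses only agent $4$, and that $(0,0.5,0.5)$ misses only agent $5$, the three misses being distinct by the rotational symmetry of the construction. Consequently no solution covers all six agents: a solution either misses a core agent, or covers the core and equals one of the three extreme points, each of which omits a perturbation agent. Since omitting even a single agent caps coverage at $5=2n/3+1$ for $n=6$, this proves the bound for the base instance.

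Finally, to obtain the statement for growing $n$ (matching $\rho\le\frac23+\frac1n$) I would replicate the gadget, keeping the forcing core and installing $k$ slightly spread-out copies of each perturbation direction so that a candidate extreme point still covers two whole directions while missing the third entirely. I expect the main obstacle to be exactly this scaling step: one must choose the perturbations so that (i) the budget constraint still pins joint coverage of the core to the same finite candidate set, and (ii) no ``cheating'' solution that abandons a core agent can instead sweep an entire perturbation direction. Indeed, already for the base instance the point $(0.6,0,0.4)$ covers all three perturbation agents (while dropping core agent $3$), so the replicas must be spread far enough that $\sum_j x_j\le 1$ makes covering a full direction infeasible. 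Verifying (i)--(ii) while preserving the exact $2n/3+1$ count is the delicate part; the core/perturbation analysis above is otherwise routine.
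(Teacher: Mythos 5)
Your base case is correct: for the $6\times 3$ instance, the three core agents do force any solution covering them to be one of $(0.5,0.5,0)$, $(0.5,0,0.5)$, $(0,0.5,0.5)$, and each of these misses exactly one perturbation agent, so no solution covers all six agents. However, this only establishes the statement for $n=6$. The theorem's purpose (it is what yields the upper bound $\rho\le \frac{2}{3}+\frac{1}{n}$ in Table \ref{tab_rho_HALF}) is to provide such instances for arbitrarily large $n$, and that is precisely the step you leave unproven. Your scaling plan --- keep the rigid core, add $k$ spread-out copies of each perturbation direction --- is only a sketch, and the obstacle you yourself flag is the real one: once a solution is allowed to abandon a core agent, the exact-forcing argument (which relies on the equalities $x_j=0.5$) gives no control, and one must rule out ``cheating'' solutions such as $(0.6,0,0.4)$ that drop a core agent and sweep many perturbation agents. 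Bounding those solutions by $2n/3+1$ requires a genuine case analysis over how the budget can be split against $3k$ graded thresholds, which your proposal defers rather than performs. As it stands, the proof is incomplete.

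For comparison, the paper's construction dispenses with a fixed core entirely: it takes $k$ perturbed copies of the triangle gadget, where sub-instance $I_i$ (three agents, cyclic demands $0.5+\delta_i/2$, $0.5-\delta_i$, $\delta_i/2$) is parameterized by distinct reals $0<\delta_1<\cdots<\delta_k<1/6$. The key lemma is a \emph{pairwise incompatibility} property: no budget-feasible solution can fully cover two distinct sub-instances $I_i\cup I_j$. Its proof is a budget argument rather than an exact-forcing one: either all coordinates are at least $0.5-\delta_j$ (infeasible since $\delta_j<1/6$), or one coordinate is at least $0.5+\delta_j/2$, after which satisfying the second agent of $I_i$ consumes at least $0.5-\delta_i$ more, leaving less than $\delta_i/2$ --- below every remaining demand --- for the third coordinate. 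Pairwise incompatibility immediately gives the count: at most one sub-instance contributes $3$ satisfied agents and every other contributes at most $2$, i.e., at most $3+2(k-1)=2n/3+1$ in total, uniformly for all $n=3k$. If you want to salvage your approach, the cleanest fix is essentially to adopt this structure: make every triple perturbed (so that all triples play symmetric roles) and prove the two-triple incompatibility lemma, rather than trying to combine one exactly-rigid core with approximately-rigid copies.
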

\begin{proof}
Consider an instance $I$ with $n=3k$ agents for some positive integer $k$. There are $k$ distinct reals $\delta_1, \ldots, \delta_k$ such that 
$0< \delta_1 < \delta_2 < \cdots < \delta_k <1/6$. 
For each $\delta_i$ the instance comprises 3 agents who have made the following demands.  
\begin{equation} \label{nocover0}
\left( \begin{array}{ccc}
0.5+\delta_i/2 & 0.5-\delta_i& \delta_i/2 \\
\delta_i/2 & 0.5+\delta_i/2  & 0.5-\delta_i \\
0.5-\delta_i & \delta_i/2 & 0.5+\delta_i/2  
\end{array} \right)
\end{equation}
The sub-instance with the 3 agents associated with $\delta_i$ is called $I_i$. For every pair of reals $(\delta_i,\delta_{i'})$ such that $0< \delta_i < \delta_{i'} <1/6$, we have that 
\begin{equation}
0.5+\delta_{i'}/2 \ge  0.5+\delta_{i}/2 > 0.5-\delta_i \ge   
0.5-\delta_{i'} > \delta_{i'}/2 \ge \delta_{i}/2.
\end{equation}
The solution $(0.5-\delta_k , 0.5+\delta_k/2 , \delta_k/2)$ satisfies the 3 agents of $I_k$, and 2 agents of $I_i$, for each $i\in [k-1]$. Thus, a solution can satisfy $3+2(k-1)=2k+1=2n/3+1$ agents of $I$.  

Let us show now that no solution can satisfy strictly more than $2k+1$ agents of $I$. To do so, we use the following property. 
\begin{center}{\em
``For every pair of distinct reals $(\delta_i,\delta_j)$ such that $0< \delta_i<\delta_j<\frac{1}{6}$, it is impossible to satisfy the 6 agents of $I_i \cup I_j$.''}
\end{center}
Indeed, satisfying at least $2k+2$ agents of $I$ with a single solution imposes that at least two sub-instances $I_i$ and $I_j$ have all their agents satisfied, contradicting the property. 

Let us prove the property by contradiction. Suppose that a solution ${\bf x}=(x_1,x_2,x_3)$ satisfies the 6 agents of $I_i \cup I_j$. If the largest coordinate of ${\bf x}$ is strictly smaller than $0.5+\delta_{j}/2$, then each coordinate of ${\bf x}$ must be at least $0.5-\delta_{j}$ in order to satisfy the 3 agents of $I_j$. This is infeasible because the sum of the coordinates would exceed 1 (indeed, $1/6 > \delta_j \Rightarrow 3(0.5-\delta_j)> 1$). Therefore, at least one coordinate, say $x_1$ w.l.o.g., is at least $0.5+\delta_{j}/2$, and every agent of $I_i \cup I_j$ is locally satisfied for project 1. It remains to locally satisfy at least one project of every agent of $I_i \cup I_j$. The second agent of $I_i$ has declared $0.5+\delta_i/2$ and $0.5-\delta_i$ for projects 2 and 3, respectively. This means that the value of $x_2$ or $x_3$ is at least $\min(0.5+\delta_i/2,0.5-\delta_i)=0.5-\delta_i$. Since $x_1+x_2+x_3 \le 1$ and $\delta_j>\delta_i$, the value of the last coordinate is at most $1-(0.5+\delta_j/2)-(0.5-\delta_i)=\delta_i-\delta_j/2<\delta_i/2$. Because $\delta_i/2$ is the smallest value declared by an agent of $I_i \cup I_j$, we get that the last coordinate does not contribute to the local satisfaction of any project. Therefore, the only possibility is to satisfy the 6 agents of $I_i \cup I_j$ by locally satisfying them for projects $1$ and $2$, or projects 1 and 3, but this is incompatible with $x_1+x_2+x_3 \le 1$. \end{proof}

Closing the gap between $\frac{1}{2}+\frac{1}{2n}$ and $\frac{2}{3}+\frac{1}{n}$ for determining the best ratio $\rho$ is an intriguing open problem. 

\subsection{Extreme Values of the Threshold}

If $\tau=1$, then the following result immediately implies that %it is not difficult to see that %${\bf x}=(1,0, \ldots,0)$ satisfies all agents, so 
$\rho=1$. %, which is the best possible proportion. 
\begin{proposition} \label{prop:tau=1}
When $\tau=1$, any arbitrary solution $\mathbf{x}$ such that $\sum_{j \in [m]}x_j = 1$ satisfies all agents.
\end{proposition}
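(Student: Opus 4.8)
The plan is to argue by contradiction using a simple averaging (pigeonhole) argument built on the feasibility constraint \eqref{mass_l}. Recall that when $\tau=1$, an agent $i$ is satisfied by $\mathbf{x}$ as soon as she is locally satisfied for a single project, i.e., as soon as there exists some $j \in [m]$ with $x_j \ge \dell^i_j$. So it suffices to show that no agent can fail to be locally satisfied on every project simultaneously.

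First I would fix an arbitrary agent $i \in N$ and suppose, for the sake of contradiction, that $\mathbf{x}$ does \emph{not} satisfy her. By definition of local satisfaction this means $x_j < \dell^i_j$ for every project $j \in [m]$. Next I would sum these $m$ strict inequalities over all projects. Since $m \ge 1$ (there is at least one project) and every term is a strict inequality, the summed inequality is itself strict, giving
\[
\sum_{j \in [m]} x_j \;<\; \sum_{j \in [m]} \dell^i_j .
\]

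Finally I would combine this with the two hypotheses at hand: the assumption $\sum_{j \in [m]} x_j = 1$ on the solution, and the demand-feasibility constraint \eqref{mass_l}, namely $\sum_{j \in [m]} \dell^i_j \le 1$. Chaining these yields
\[
1 \;=\; \sum_{j \in [m]} x_j \;<\; \sum_{j \in [m]} \dell^i_j \;\le\; 1,
\]
a contradiction. Hence there must exist at least one project $j$ with $x_j \ge \dell^i_j$, so agent $i$ is locally satisfied there and therefore satisfied; since $i$ was arbitrary, every agent is satisfied.

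There is no real obstacle here: the argument is a one-line averaging fact, and the proof works for \emph{any} $\mathbf{x}$ summing to $1$ rather than some cleverly chosen one, which is exactly why the proposition immediately gives $\rho = 1$ for $\tau = 1$. The only point that deserves a moment's care is ensuring the sum of the per-project strict inequalities is genuinely strict, which holds precisely because $m \ge 1$; this is what rules out the degenerate boundary case and lets the strict ``$<$'' survive into the final contradiction.
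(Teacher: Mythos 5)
Your proof is correct and is essentially the paper's argument in contrapositive form: the paper assumes (w.l.o.g.) local dissatisfaction on the first $m-1$ projects and deduces $x_m > \dell^i_m$ from the chain $\dell^i_m \le 1 - \sum_{j \in [m-1]} \dell^i_j < 1 - \sum_{j \in [m-1]} x_j = x_m$, while you assume dissatisfaction on all $m$ projects and reach the same contradiction $1 = \sum_j x_j < \sum_j \dell^i_j \le 1$. The underlying arithmetic — summing the strict inequalities and invoking (\ref{mass_l}) — is identical.
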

\begin{proof} Consider an agent and, without loss of generality, assume that for all $j \in [m-1]$, $d_j > x_j$. Then this agent is locally satisfied by the $m^{th}$ project, as $d_m \leq 1 - \sum_{j \in [m-1]} d_j < 1- \sum_{j \in [m-1]} x_j = x_m$.
\end{proof}

When $\tau=m-1$, we can suppose that $m > 2$ because when $m=2$, $\tau=1 \Leftrightarrow \tau=m-1$, and $\rho=1$.  
Consider an instance with $m$ projects and $n=m$ agents. For each $i \in [m]$, agent $i$'s demands for projects $i$ and $i+1$ are both equal to $1/2$, and zero everywhere else.\footnote{Agent $m$ demands $1/2$ for projects $m$ and $1$, and zero everywhere else.}   For this instance, Observation \ref{obs2} indicates that we can restrict the coordinates of a solution ${\bf x}$ to $\{0,\frac{1}{2}\}$. In other words, ${\bf x}$ has exactly two coordinates equal to $1/2$, and zeros everywhere else. Such a solution satisfies up to 4 agents out of $m$ (if the $1/2$s are not consecutive). Thus, $\rho \le \frac{4}{m}=\frac{4}{n}$.  

At least one agent is satisfied under the {\sc dictator} rule so $\rho \ge 1/n$ for all $\tau$. Observe that this is the best possible value of $\rho$ when $\tau=m$. Indeed, consider an instance containing two agents with demand vectors ${\bf d}$ and ${\bf d}'$ such that $\sum_{j=1}^m d_j=\sum_{j=1}^m d'_j=1$, and ${\bf d} \neq {\bf d}'$. The only possibility to satisfy ${\bf d}$ (resp., ${\bf d}'$) when $\tau=m$ is to output ${\bf d}$ (resp., ${\bf d}'$). Thus, unless all agents have the same demands, only one agent can be satisfied.

\section{Classes of Instances Where All Agents Can Be Satisfied} \label{sec:sat_all_agents}

This section deals with a characterization, according to $n$ and $m$,  of classes of instances admitting a budget-feasible solution ${\bf x}$ which satisfies all agents. The situation depends on $\tau$, and the most interesting cases are $\tau \in \{m/2,m-1\}$.

\subsection{At-Least-Half Scenario}

\begin{table}
\begin{center}

\begin{tabular}{l|ccc}
%\hline
&$m=2$&$m=3$&$m \ge 4$\\
\hline
%$n=2$&\cmark (Prop.  \ref{prop:tau=1})&\cmark (Lem.  \ref{lem_pair})&\cmark (Lem.  \ref{lem_pair})\\
$n \le 3$&\cmark (Prop.  \ref{prop:tau=1})&\cmark (Thm.  \ref{thm3})&\cmark (Thm.  \ref{thm3})\\
$n \ge 4$&\cmark (Prop.  \ref{prop:tau=1})&\xmark (Inst. \ref{instance_nocover})&\\
%\hline
\end{tabular}
\caption{\label{tab_HALF}Satisfying all agents when $\tau=m/2$}
\end{center}
\end{table}

When $\tau=m/2$, we prove that every instance having at most 3 agents admits a 
%such 
solution satisfying everybody, for any number of projects. This also holds for every instance having at most 2 projects, for any number of agents. These results cannot be extended to more agents, or more projects, because of the impossibility to satisfy all agents of Instance~\ref{instance_nocover} which contains 3 projects and 4 agents. The results are summarized in Table \ref{tab_HALF}.

\begin{comment}
\begin{proposition} \label{allagentsm=2} When $m = 2$ and $\tau=1$, any solution $(x_1,x_2)$ such that $x_1+x_2=1$ satisfies any agent.%\footnote{Note that we do not exploit the preference profile in this optimal solution. }
\end{proposition}
\begin{proof}
By contradiction, suppose some agent $i$ is unsatisfied:   $\dell^i_1>x_1$ and $\dell^i_2>x_2$ hold so we get that 
$\dell^i_1+\dell^i_2>x_1+x_2=1$, contradicting (\ref{mass_l}).  %Thus, the entire population is covered ($\rho=1$) by any $(x_1,x_2)$ such that $x_1+x_2=1$. 
\end{proof}
\end{comment}

When $m=2$ and $\tau=m/2$ (hence $\tau=1$),  we get from Proposition \ref{prop:tau=1} that any solution $(x_1,x_2)$ such that $x_1+x_2=1$ satisfies all agents. 
When $n=2$, we know from Lemma \ref{lem_pair} that the demand vector of a dictator always satisfies the two agents for any number of projects. 

Now we move on to the %main, and 
most technical result of this section, dealing with 3 agent instances.

%The following result is a corollary of Theorem \ref{thm_dict}.

%\begin{corollary} \label{cordict} When $n=2$, the demand vector of a dictator always satisfies the two agents for any number of projects. % $m$.
%\end{corollary}
%\begin{proof}
%Theorem \ref{thm_dict} indicates that a dictator satisfies $3/2$ agents, but this must be $\lceil 3/2 \rceil=2$ agents because one cannot fractionally satisfy an agent. \end{proof}
%However, the statement of Corollary \ref{cordict} does not extend to $n=3$, as shown in Instance \ref{example1} where no agent covers everybody.  

 %which always admits a solution satisfying all agents. %Our proof technique is the following. 
\begin{comment}
We create three solutions by reshuffling the declarations of the agents, i.e., a bijection between the coordinates of the agents' declarations and the coordinates of the three solutions is found. The bijection is such that every new solution satisfies the three agents, but it is not guaranteed that every solution is feasible since (\ref{mass_l}) can be violated. However, at least one of the three solutions must satisfy (\ref{mass_l}) because they are permutations of the agents' declarations, and each agent's declaration satisfies (\ref{mass_l}). Therefore, after a careful determination of the bijection which constitutes the main technical part of the result, one of the three constructed solutions satisfying (\ref{mass_l}) is output.
\end{comment}

\begin{theorem} \label{thm3} When $\tau=m/2$, instances with 3 agents and $m \ge 3$ projects always admit a solution which satisfies all its agents, and such a solution can be computed in polynomial time.    
\end{theorem}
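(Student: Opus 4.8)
The plan is to normalize the instance, reduce to a combinatorial ``rank'' problem, and then split into two cases according to whether a single agent's demand vector already covers the other two.

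First I would use the remark (valid in this section) that it suffices to treat instances with $\sum_{j}\dell^i_j=1$ for every agent $i$, since raising an agent's demands only makes covering harder. By Observation~\ref{obs2} a solution may be taken coordinate-wise among the declared values, so for each project $j$ I sort the three demands $\dell^1_j,\dell^2_j,\dell^3_j$ into $\mathrm{min}_j\le\mathrm{med}_j\le\mathrm{max}_j$ and record, for every agent $i$, its \emph{rank} $\mathrm{rank}_i(j)\in\{1,2,3\}$. A candidate solution is then just a choice of one rank per project, and it locally satisfies agent $i$ at $j$ iff the chosen rank is $\ge\mathrm{rank}_i(j)$. The key accounting identity is that if three candidate solutions use, at every project, the three ranks $\{1,2,3\}$ (one each), then their budgets sum to $\sum_j(\mathrm{min}_j+\mathrm{med}_j+\mathrm{max}_j)=\sum_i\sum_j\dell^i_j=3$, so at least one of them is budget-feasible.

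By Lemma~\ref{lem_pair} the ``$\tau$-covers'' relation is a tournament on the three agents, so either some agent covers both others (Case~A) or the relation is a $3$-cycle (Case~B). In Case~A I simply output the covering agent's demand vector: it covers all three and is budget-feasible by~(\ref{mass_l}). In Case~B no agent covers both others; writing $\mathrm{mx}_i$ for the number of projects where agent $i$ is the unique maximum, this forces $\mathrm{mx}_i<\tau$ for every $i$ (an agent that is the maximum on $\ge\tau$ projects $\tau$-covers everyone, hence would be a Case~A agent). Here I would build three solutions by picking, at each project, a permutation assigning the ranks $\{1,2,3\}$ to the three solutions; by the identity above one of them is feasible, and I arrange the permutations so that \emph{each} of the three solutions covers all three agents, i.e.\ misses each agent on at most $m-\tau$ projects. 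Returning the cheapest of the three is then correct and clearly polynomial (sort each column, choose the permutations, compare budgets).

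The main obstacle is exactly this balancing step in Case~B: the averaging bound only yields that \emph{some} solution is feasible, and the median vector only yields coverage (indeed it covers agent $i$ on $m-\mathrm{mx}_i\ge\tau$ projects), so the real work is producing three solutions that are \emph{simultaneously} coverage-correct while totalling budget $3$. Two facts drive the choice of permutations. First, the total miss-load on agent $i$, summed over the three solutions, equals $\mathrm{med}_i+2\,\mathrm{mx}_i$, and I would show this is at most $3(m-\tau)$ so that the load can be split evenly; the delicate point I expect to require the most care is that this bound never fails, the only threatening configuration being an agent that is never a minimum and is the maximum on exactly $(m-1)/2$ projects (odd $m$), which I would rule out by showing it forces one of the \emph{other} two agents to be the maximum on at least $\tau$ projects, contradicting Case~B. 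Second, given the load bound I would extract the actual per-project permutations either from an explicit Latin-square rotation of ranks over projects (which already works in the symmetric cyclic instances) or, in general, from feasibility of the associated fractional assignment together with the integrality of the underlying transportation/flow polytope. Throughout, handling ties in the demands so that ``rank'' and ``maximum'' are well defined is a routine but necessary technicality.
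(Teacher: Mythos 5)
Your high-level architecture coincides with the paper's: normalize demands to sum exactly $1$, restrict coordinates to declared values (Observation~\ref{obs2}), observe that three solutions which at every project use the three ranks once each have total budget $3$ (so one is feasible), and split via Lemma~\ref{lem_pair} into ``one agent covers both others'' versus the $3$-cycle case. Your accounting is also sound: the total miss-load on agent $i$ across three such solutions is $\mathrm{med}_i+2\,\mathrm{mx}_i$, the only configuration violating the bound $3(m-\tau)$ in the cycle case is an agent who is never a minimum and is the maximum on exactly $(m-1)/2$ projects, and that configuration is indeed refutable: if agent $a$ is never a minimum and $\mathrm{mx}_a=k$ (with $m=2k+1$), then $\#\{j: a_j\ge b_j\}+\#\{j: a_j\ge c_j\}\ge m+k$, and since in the cycle case $a$ fails to cover one agent, say $c$, we get $a_j\ge b_j$ everywhere, forcing $c$ to be the maximum on all $k+1$ of $a$'s median projects, i.e.\ $c$ covers both others --- contradiction. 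So everything up to and including the load bound is correct (though you assert rather than prove it).

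The genuine gap is the final step, which is precisely where the paper's entire technical effort lies. A load bound of $3(m-\tau)$ per agent is only a \emph{necessary} condition for splitting the misses evenly; you still must exhibit, per project, a bijection between the three solutions and the three ranks so that \emph{all nine} (solution, agent) miss-counts are at most $m-\tau$. Your two proposed tools do not close this: the ``Latin-square rotation'' is only claimed for symmetric cyclic instances, and the integrality appeal fails because the relevant polytope --- choose one of six permutations at each project, subject to nine coupled upper-bound constraints --- is \emph{not} a transportation or network-flow polytope. The coupling that breaks it is exactly the structural feature of the problem: the rank-$1$ solution at a project misses \emph{both} the maximum and the median agent simultaneously, so the constraint matrix has no evident total unimodularity, and fractional feasibility (which does follow from your load bound, by averaging over the six permutations) does not by itself yield an integral assignment. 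The paper instead proves existence constructively: it partitions the projects into seven order-type classes $Q_1,\dots,Q_7$, derives from the $3$-cycle assumption the strict counting inequalities $q_1>q_6$, $q_2>q_4$, $q_3>q_5$ plus triangle inequalities on the leftover counts, and then builds an explicit perfect pairing of the projects (one initial triple plus $(m-3)/2$ pairs) with a coloring such that each solution misses each agent at most once per group --- giving at most $1+(k-1)=k=m-\tau$ misses per (solution, agent) pair. Until you supply such a construction, or an actual proof that your load bounds imply integral feasibility, the proof is incomplete. A minor additional omission: you only discuss odd $m$; the even case needs the (easy) reduction of dropping one project, as the paper notes.
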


\begin{proof} [Sketched proof] 
We create 3 solutions, namely a $3 \times m$ matrix $\mathcal{S}$, by reshuffling, column by column, and in a non-trivial way, the $3 \times m$ matrix $\D$ containing the agents' demands. The key point is that we can always determine $\mathcal{S}$ (and efficiently compute it) such that each of its lines is a solution satisfying the three 
agents. 
Some lines of $\mathcal{S}$ may violate (\ref{mass_l}), but at least one of them  satisfies it because $\mathcal{S}$ is a permutation of $\D$, and each line of $\D$ verifies (\ref{mass_l}). Therefore, after the careful determination of $\mathcal{S}$, one of its lines satisfying (\ref{mass_l}) is output. 

The full proof appears in the appendix (Section \ref{sec:proof:lem:pair}). 
\end{proof}

Let us illustrate Theorem \ref{thm3} on the following instance $\D$ with 3 agents and 5 projects. 
\begin{equation} \label{mat1}
\left( \begin{array}{ccccc}
0.3 & 0.33 & 0.04 & 0.16& 0.17\\
0.2 & 0.18 & 0.28& 0.32& 0.02 \\
0 & 0.19 & 0.21& 0.31& 0.29   
\end{array} \right) \end{equation}

Following the full proof of Theorem \ref{thm3},  
matrix (\ref{mat1}) is reshuffled, column by column, to give the following matrix $\mathcal{S}$. 
\begin{equation} \label{mat2}
\left( \begin{array}{ccccc}
0.3 & 0.18 & 0.21 & 0.31& 0.02\\
0 & 0.33 & 0.28& 0.32& 0.17 \\
0.2 & 0.19 & 0.04& 0.16& 0.29   
\end{array} \right) \end{equation}

Each line of (\ref{mat2}) 
%is a solution which 
satisfies the 3 agents but only the third one  $(0.2 , 0.19 , 0.04, 0.16, 0.29)$ 
is feasible because its total sum does not exceed 1.

\subsection{All-But-One Scenario} \label{sec:saa_ABO}

\begin{table}
\begin{center}
\begin{tabular}{l|cccc}
%\hline
&$m=2$&$m=3$&$m=4$&$m=5$\\
\hline
$n=2$&\cmark (Prop.  \ref{prop:tau=1})&\cmark (Thm.  \ref{thm3})&\cmark (Prop. \ref{counter_2_4})&\xmark (Inst. \ref{nocover_five})\\
$n=3$&\cmark (Prop.  \ref{prop:tau=1})&\cmark (Thm.  \ref{thm3})&\xmark (Inst. \ref{nocover_four})&\\
$n=4$&\cmark (Prop.  \ref{prop:tau=1})&\xmark (Inst. \ref{instance_nocover})&&\\
%\hline
\end{tabular}
\caption{\label{tab_ABO}Satisfying all agents when $\tau=m-1$}
\end{center}
\end{table}

Let us identify the instances for which a solution can satisfy all agents when $\tau=m-1$. Our results are summarized in Table \ref{tab_ABO}.  
Using Proposition  \ref{prop:tau=1}, any vector $(x_1,x_2)$ such that $x_1+x_2=1$ satisfies any agent when $m = 2$. 
We know from Theorem  \ref{thm3} that instances with 3 projects and (at most) 3 agents always admit a solution which satisfies all its agents ($\lceil m/2\rceil=m-1$ holds when $m=3$). This positive result cannot be extended to $n\ge 4$ because Instance \ref{instance_nocover} is an example with 4 demand vectors\footnote{Multiple agents can have the same demand vector.} and 3 projects which does not admit a solution satisfying all the demands. Similarly, extending the positive result to a larger number of projects is not possible because Instance \ref{nocover_four}, with 3 agents and 4 projects, does not admit any solution satisfying all agents. 
\begin{instance} \label{nocover_four}
%\begin{equation} 
$$\left( \begin{array}{cccc}
0.1 & 0.3 & 0.4 & 0.2\\
0.2 & 0.4 & 0.1 & 0.3 \\
0.4 & 0.2 & 0.3 & 0.1 
\end{array} \right)$$
\end{instance}

Instance \ref{nocover_four} can be extended to a larger number of agents and projects.

\begin{proposition} \label{counter_2_4} When $\tau=m-1$, 
every instance with 2 agents and $4$ projects admits a solution which satisfies all agents.
\end{proposition}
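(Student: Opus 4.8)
The plan is to exploit the slack that $\tau=m-1=3$ provides: each of the two agents may be left unsatisfied on exactly one project. Write the demand vectors as $\dell^1$ and $\dell^2$, and for each project $j\in[4]$ set $M_j=\max(\dell^1_j,\dell^2_j)$, $p_j=(\dell^1_j-\dell^2_j)^+$ and $q_j=(\dell^2_j-\dell^1_j)^+$, so that $M_j=\dell^1_j+q_j=\dell^2_j+p_j$ and $p_jq_j=0$ for every $j$. If $p_j=0$ for all $j$, then $\dell^2$ dominates $\dell^1$ coordinatewise and $\mathbf{x}=\dell^2$ already satisfies both agents on all four projects with $\sum_j x_j=\sum_j\dell^2_j\le 1$; the symmetric remark disposes of $q\equiv 0$. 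So I may assume $\max_j p_j>0$ and $\max_j q_j>0$.

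Now I would let agent $1$ abandon the project $a=\arg\max_j p_j$ on which her demand most exceeds agent $2$'s, and let agent $2$ abandon $b=\arg\max_j q_j$. Since $p$ and $q$ have disjoint supports and both maxima are positive, $a\ne b$. I then set $x_j=M_j$ for $j\notin\{a,b\}$, $x_a=\dell^2_a$ and $x_b=\dell^1_b$. A direct check shows agent $1$ is locally satisfied on every project except $a$ (three projects, since $x_a=\dell^2_a<\dell^1_a$ is the only miss) and agent $2$ on every project except $b$; hence both are covered for $\tau=3$. The total cost of this solution is $\sum_j M_j-p_a-q_b=\sum_j M_j-\max_j p_j-\max_j q_j$, so the whole statement reduces to showing that this quantity is at most $1$.

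The feasibility inequality is the crux, and it is where the restriction to $m=4$ must be used (the analogue fails at $m=5$). From $M_j=\dell^1_j+q_j$ and $\sum_j\dell^1_j\le 1$ I get $\sum_j M_j\le 1+Q$ with $Q=\sum_j q_j$, and symmetrically $\sum_j M_j\le 1+P$ with $P=\sum_j p_j$. Thus it suffices to establish that at least one of $Q-\max_j q_j\le\max_j p_j$ or $P-\max_j p_j\le\max_j q_j$ holds: either inequality, substituted into the matching bound, yields total cost $\le 1$. The main obstacle is to rule out that both fail at once, and I expect to do this by a support-counting dichotomy. If both failed, then $Q>\max_j q_j$ and $P>\max_j p_j$ would each force at least two positive entries, so $|\mathrm{supp}(p)|,|\mathrm{supp}(q)|\ge 2$; since these supports are disjoint and there are only four projects, both sizes are exactly $2$. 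Writing the two positive $p$-values as $p_1\ge p_2>0$ and the two positive $q$-values as $q_1\ge q_2>0$, the two failed inequalities read $q_2>p_1$ and $p_2>q_1$, producing the impossible chain $p_1\ge p_2>q_1\ge q_2>p_1$. This contradiction forces one of the two inequalities to hold, and the feasibility bound follows, completing the construction.
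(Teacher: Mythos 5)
Your proof is correct, but it reaches feasibility by a genuinely different mechanism than the paper's. The paper first checks whether one of the two demand vectors already satisfies both agents; if not, there are exactly two coordinates with $\dell^1_j>\dell^2_j$ and two with $\dell^1_j<\dell^2_j$, and it forms \emph{two} complementary candidate solutions by swapping entries column by column, so that each coordinate receives the larger demand in one candidate and the smaller in the other. Since the two candidates' budgets then sum to $\sum_j(\dell^1_j+\dell^2_j)\le 2$, at least one of them is feasible --- a pigeonhole argument in the same spirit as the reshuffling proof of Theorem \ref{thm3}. You instead dispose of the degenerate case by coordinatewise dominance and, in the main case, commit to a \emph{single} candidate: take $M_j=\max(\dell^1_j,\dell^2_j)$ everywhere except at the coordinate of largest excess of each agent, and prove directly that this one solution fits the budget, via the dichotomy ``$Q-\max_j q_j\le\max_j p_j$ or $P-\max_j p_j\le\max_j q_j$'' and the support-counting contradiction $p_1\ge p_2>q_1\ge q_2>p_1$. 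Both arguments exploit $m=4$ crucially but differently: the paper uses it to get the exact two-two split that defines its pair of candidates, while you use it to force both supports to have size exactly two in the contradiction. What your route buys is an explicit identification of \emph{which} sacrifice works (cut each agent's largest excess), rather than a ``one of two candidates'' guarantee; what the paper's route buys is brevity, since the sum-at-most-$2$ observation replaces your entire dichotomy analysis. Your remark that the dichotomy genuinely fails at $m=5$ is also accurate and consistent with the paper's Instance \ref{nocover_five}.
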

\begin{proof} Suppose the instance is 
$$\left( \begin{array}{cccc}
a_1 & a_2 & a_3 & a_4\\
b_1 & b_2 & b_3 & b_4
\end{array} \right)$$
where $\sum_{j=1}^4 a_i \le 1$ and  $\sum_{j=1}^4 b_i \le 1$. If ${\bf a}$ (resp., ${\bf b}$) satisfies both agents, then return it. 
Otherwise, neither ${\bf a}$ nor ${\bf b}$ satisfies both agents. This means that for two projects, the demand of the first agent exceeds the demand of the second agent, and it is the other way around for the other two projects. Let us suppose w.l.o.g. that $a_1>b_1$, $a_2>b_2$, $a_3<b_3$, and $a_4<b_4$. Each line of the following array is a solution satisfying both agents,
$$\left( \begin{array}{cccc}
a_1 & b_2 & b_3 & a_4\\
b_1 & a_2 & a_3 & b_4
\end{array} \right)$$
and at least one of them respects the budget constraint (\ref{mass_l}), otherwise contradicting $\sum_{j=1}^4 a_i \leq 1$ and $\sum_{j=1}^4 b_i \leq 1$.
\end{proof}
Unfortunately, Proposition \ref{counter_2_4} does not extend to instances with 2 agents and 5 projects, because of Instance \ref{nocover_five}.

\begin{instance}  \label{nocover_five}
$$\left( \begin{array}{ccccc}
0.32 & 0.32 & 0.32 & 0.02 & 0.02\\
0.05 & 0.05 & 0.05 & 0.425 & 0.425
\end{array} \right)$$ 
\end{instance}

\subsection{Extreme Values of the Threshold} \label{sec:extreme:classes}
When $\tau=1$, all agents 
can be satisfied, %are satisfied by $(1,0,\ldots,0)$, 
for every instance (cf. Proposition \ref{prop:tau=1}). When $\tau=m$, it is not difficult to construct instances with at least $2$ players and at least $2$ projects where it is impossible to satisfy all agents.

\section{Satisfying All Agents of a Single Instance} 

\label{sec:social_welfare_max}

This section deals with the complexity of the following problem. 
\begin{definition}[\AAS$(\tau)$] Given $\tau$ and the agents' demands, decide whether a budget-feasible solution satisfying all agents exists.
\end{definition}

Note that the problem is easy when $\tau \in \{1,m\}$.  Proposition \ref{prop:tau=1} covers the case 
$\tau=1$. When $\tau=m$, all agents are satisfied by  
the solution $x_j=\max_{i \in N} \dell_j^i$ for all $j \in [m]$, therefore the answer to the above problem is yes if, and only if $\sum_{j\in[m]} ( \max_{i \in N} \dell_j^i ) \le 1$.

We therefore focus on the case $\tau=m-1$ and begin by observing that, unlike Sections \ref{sec:fraction} and \ref{sec:sat_all_agents}, here we \emph{cannot} assume without loss of generality that the demands of each agent sum up to exactly $1$. Indeed, even though it may intuitively seem that allowing agents to express demands that do not use up all the budget should make instances ``easier'' (in the sense that it becomes more likely that a good solution exists), such instances do not necessarily become easier in terms of computational complexity, as this generalizes the class of potential inputs. 

Somewhat surprisingly, we discover that this subtle observation is crucial. Our first result is to show that if we allow instances  where $\sum_{j\in [m]} \dell^i_j <1$, \AAS$(m-1)$ is strongly NP-complete (Theorem \ref{thm:mminus1h}). However, if we impose the restriction that $\sum_{j\in [m]} \dell^i_j =1$, that is, all agents must express demands that use up the budget, then the problem admits a pseudopolynomial time algorithm, and hence cannot be strongly NP-complete, unless P=NP (Theorem \ref{thm:mminus1p}). Interestingly, this subtle distinction only plays a role if $\tau=m-1$: we show that for all $c\ge 2$, \AAS$(m-c)$ is strongly NP-complete, even if all agents have demands that sum up to exactly $1$ (Theorem \ref{thm:mminus2}). 

We leave it as a challenging open  question to determine the complexity of \AAS$(m/2)$.

\subsection{All-But-One Scenario: General Demands}

We begin by showing that \AAS$(m-1)$ is strongly NP-complete when  $\sum_{j\in [m]} \dell^i_j \le 1$ for all $i\in N$ (general demands).

\begin{theorem}\label{thm:mminus1h} 
\AAS$(m-1)$ is strongly NP-complete.  
\end{theorem}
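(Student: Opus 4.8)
The plan is to settle membership in NP first and then prove strong NP-hardness by a polynomial reduction from \textsc{Vertex Cover}, which is NP-complete and, since the only number in its instances is the target size $t\le |V|$, is a convenient source of \emph{strongly} NP-hard instances. Throughout I will make sure every demand produced by the reduction is a simple fraction of polynomially bounded magnitude, so that hardness survives the restriction to polynomially bounded numbers. For membership, I invoke Observation~\ref{obs2}: if some budget-feasible solution satisfies all agents, then one does so with $x_j\in\{0\}\cup\{\dell^i_j:i\in N\}$ for every $j$. Such a solution has polynomial size and can be verified (budget constraint plus the $\tau=m-1$ condition per agent) in polynomial time, so $\AAS(m-1)\in$ NP.

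For hardness, I take a \textsc{Vertex Cover} instance $(G,t)$ with $G=(V,E)$ and, w.l.o.g., $t\ge 2$ (the cases $t\le 1$ are polynomial). I create one project per vertex, so $m=|V|$, and one agent per edge: the agent of edge $\{u,v\}$ demands $c:=1/t$ on the two projects $u$ and $v$ and $0$ on all other projects. Each agent's demands sum to $2/t\le 1$, so (\ref{mass_l}) holds; this is precisely where general demands are needed, since $2/t<1$ in general. I set $\tau=m-1$.

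The key structural point is that an edge agent $\{u,v\}$ has positive demand on only two projects, so its other $m-2$ coordinates are locally satisfied for free and reaching the threshold $\tau=m-1$ amounts to being locally satisfied on at least one of $u,v$; hence this agent is satisfied iff $x_u\ge c$ or $x_v\ge c$. By Observation~\ref{obs2} I may assume $x_j\in\{0,c\}$ for all $j$, calling a project \emph{high} when $x_j=c$. Then all agents are satisfied iff every edge has a high endpoint, i.e. iff the high projects form a vertex cover, and the total budget used equals $c$ times the number of high projects. Consequently a budget-feasible satisfying solution exists iff $G$ has a vertex cover of size at most $1/c=t$. Since every number is $0$ or $1/t$ with $t\le|V|$ (equivalently, after scaling by $t$, all demands are $0$ or $1$ and the budget is $t$), all magnitudes are polynomially bounded, so the reduction establishes \emph{strong} NP-hardness.

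I expect the delicate points to be bookkeeping rather than conceptual: calibrating $c=1/t$ so that ``budget $\le 1$'' translates \emph{exactly} into ``vertex cover of size $\le t$,'' and using Observation~\ref{obs2} to reduce to a $\{0,c\}$-valued solution so that no exotic fractional allocation can beat the combinatorial optimum. Minor technicalities are keeping each agent's demand sum $\le 1$ (needing $t\ge 2$, harmless since small-$t$ \textsc{Vertex Cover} is polynomial) and, if one wishes $m\ge 3$, padding with a couple of dummy projects with zero demand everywhere. Notably, the reduction relies crucially on slack ($2/t<1$), which is consistent with the forthcoming pseudopolynomial algorithm for the tight case $\sum_{j\in[m]}\dell^i_j=1$.
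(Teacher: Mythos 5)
Your proposal is correct and follows essentially the same route as the paper: the identical reduction from \textsc{Vertex Cover} with one project per vertex, one agent per edge, demands $1/t$ on the edge's two endpoints and $0$ elsewhere, and the same equivalence between budget-feasible satisfying solutions and vertex covers of size at most $t$. Your explicit treatment of NP membership via Observation~\ref{obs2} and of the normalization issues (requiring $t\ge 2$, unary-boundedness of the numbers) only makes explicit what the paper leaves implicit.
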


\begin{proof}

We perform a reduction from a \textsc{vertex cover} instance $G=(V,E)$, with target vertex cover size $k$. We
construct one agent for each edge $e\in E$ and one project for each
vertex $v\in V$. If the $i$-th agent represents edge $e$ with endpoints represented by projects $j_1,j_2$, we set $d^i_{j_1}=d^i_{j_2}=\frac{1}{k}$, while all other demands are $0$. Note that for each agent the sum of all demands is $\frac{2}{k}$, which is significantly smaller than $1$ (we can assume that $k$ is large, otherwise the original instance is easy). 

Now, if there is a vertex cover $S$ of size $k$ in $G$, we assign value $\frac{1}{k}$ to the projects representing vertices of $S$ and $0$ to all other projects. We see that for each agent we satisfy at least one of the two strictly positive demands, so we satisfy at least $m-1$ demands of each agent. For the converse direction, if there is a solution satisfying $m-1$ demands of each agent, this solution must give value $\frac{1}{k}$ to at most $k$ projects, from which we extract a set $S$ of at most $k$ vertices. This set must be a vertex cover, because if some edge $e$ is not covered by $S$ the corresponding agent would have $2$ unsatisfied demands.
\end{proof}

\subsection{All-But-One Scenario: Tight Demands}

Theorem \ref{thm:mminus1h} established that \AAS$(m-1)$ is strongly NP-hard, but while crucially relying on agents whose demands are strictly smaller than the budget. We now show that if we only consider agents whose demands exhaust the budget, the problem becomes significantly more tractable. The intuition behind our algorithm is the following: suppose that an agent supplies us with a set of demands that use up the budget and we want to satisfy $m-1$ of her demands. We partition her demands into two sets, so that the sum of the demands of the two sets are as even as possible. In one of the two sets we have to satisfy all demands, so if both sets use up a constant fraction of the budget we can branch, making significant progress at each step (i.e., reducing the available budget by a constant factor). If it is impossible to partition an agent's demands in such a balanced way, we observe that the agent must essentially be placing almost all her budget demand on a single project. However, the case where all agents behave in this single-minded way is easy to handle by guessing at most one project on which we assign a big part of the budget. In this case, the solution is forced for all almost single-minded agents who do not agree with this choice (we \emph{must} satisfy all their other requests) and we can recurse on a much smaller instance.   

\begin{theorem}\label{thm:mminus1p} There is a pseudopolynomial time algorithm
that decides \AAS$(m-1)$  
under the condition that each agent has total demand exactly $1$. 
\end{theorem}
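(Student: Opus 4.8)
The plan is to design a recursive branching procedure that decides feasibility by gradually committing to a lower-bound vector $\ell\in[0,1]^m$ on the solution. At a generic node we have fixed some $\ell$ (initially $\ell=\mathbf{0}$) and we write $R:=1-\sum_j \ell_j$ for the remaining budget; a node is pruned as soon as $R<0$. For an agent $i$ we call $r^i_j:=\max(0,d^i_j-\ell_j)$ her \emph{residual} demand, write $r^i(A):=\sum_{j\in A}r^i_j$, and let $\sigma_i:=r^i([m])$ be her total residual. An agent is already satisfied exactly when at most one of her residuals is positive, and the recursion reports ``yes'' once every agent is satisfied. Since demands are rationals we scale them to integers with a common denominator $D$; then every $\ell_j$, and hence $R$, is an integer multiple of $1/D$, which is what will eventually make the tree pseudopolynomial.

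The engine of the algorithm is the following fact, which crucially uses that each agent's demands sum to exactly $1$: for every $\ell$ and every agent $i$,
\[
\sigma_i=\sum_j (d^i_j-\ell_j)^+\ \ge\ \sum_j (d^i_j-\ell_j)=\sum_j d^i_j-\sum_j \ell_j=1-(1-R)=R .
\]
Thus every agent always carries at least $R$ worth of residual demand. Given this, I would pick any unsatisfied agent $i$ and inspect $\max_j r^i_j$. If $\max_j r^i_j\le \sigma_i/2$, a greedy balanced split of her residuals into parts $A,B$ guarantees $r^i(A),r^i(B)\ge \sigma_i/4\ge R/4$; since $i$ may drop at most one project, the dropped project (if any) avoids $A$ or avoids $B$, so I \textbf{branch} into the two cases ``cover all of $A$'' and ``cover all of $B$'', raising $\ell$ accordingly. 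Each child then spends at least $R/4$ of the budget, so its remaining budget is at most $\tfrac34 R$. Hence along any root-to-leaf path the number of such branching nodes is at most $\log_{4/3}D$, and since each has out-degree $2$, the portion of the search tree generated by balanced agents has only $2^{\log_{4/3}D}=D^{O(1)}$ leaves, which is pseudopolynomial.

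The remaining situation is when no unsatisfied agent admits a balanced split, i.e. every unsatisfied agent is \emph{single-minded}, having one project $h_i$ with $r^i_{h_i}>\sigma_i/2\ge R/2$. Here I exploit the budget directly: covering two distinct projects each by more than $R/2$ already overspends, so \textbf{at most one} heavy project can be covered. The plan is to guess this project $j_0$ (or guess that none is covered). Every single-minded agent whose heavy project differs from $j_0$ is then forced, with no branching, to drop $h_i$ and have all her remaining (light) demands covered, which fixes her contribution to $\ell$. For the agents agreeing with $j_0$ I would set $\ell_{j_0}$ to the largest demand they place on $j_0$ and recurse on the residual (light) instance, whose budget has now fallen below $R/2$.

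The step I expect to be the main obstacle is the global running-time analysis, specifically forcing the single-minded case into a \emph{pseudopolynomial} rather than merely quasipolynomial bound. The trouble is that a single-minded phase guesses among up to $n$ candidate heavy projects and may itself recurse, since the light residuals of the agreeing agents can again be single-minded at the smaller budget scale; one must therefore show these guesses compose benignly. I would try to charge each guessed heavy project simultaneously to a halving of $R$ (bounding the number of nested single-minded phases by $\log_2 D$) and to the distinct project it permanently consumes, while arguing that the agreeing sub-instance is strictly smaller (in live projects or agents). Designing a potential that lets the balanced branching ($D^{O(1)}$ leaves) and the single-minded guessing coexist within $\mathrm{poly}(D,n,m)$ is the technical heart of the proof; by contrast, correctness of the dichotomy, the greedy balanced-split guarantee, and the fact that any leaf with $R\ge 0$ and all agents satisfied yields a genuine budget-feasible solution are all routine to verify.
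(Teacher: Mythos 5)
Your overall architecture tracks the paper's proof closely: the invariant $\sigma_i\ge R$ derived from total demand exactly $1$, the balanced-split branching that shrinks the budget by a constant factor, the observation that at most one ``heavy'' project can ever be covered, the guess of that project with all disagreeing agents forced, and even your idea for the analysis (each agent has a unique heavy project, so the guesses partition the agents) is exactly how the paper closes the recurrence, via $T(n,b)\le\sum_t T(n_t,b/2)$ with $\sum_t n_t\le n$ combined with $T(n,b)\le 2T(n,\tfrac{3b}{4})$, giving $T(n,b)\le n\cdot b^{\log_{4/3}2}$. However, the gap is not where you expected (the running time); it is a correctness gap in the single-minded case. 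When you guess $j_0$ and set $\ell_{j_0}$ to the \emph{largest} demand the agreeing agents place on $j_0$, you over-commit: a feasible solution may satisfy only the \emph{smaller} heavy demands on $j_0$, while an agent with a larger heavy demand on $j_0$ drops it and gets all her light demands instead. Your branch set (``none'' and ``$j_0$ at the max'') misses these solutions. Concrete counterexample: projects $j_0,p_1,p_2,q_1,q_2,r_1,\ldots,r_{30}$ ($m=35$, $\tau=34$); agent $1$ demands $0.51$ on $j_0$, $0.48$ on $p_1$, $0.01$ on $q_1$; agent $2$ demands $0.51$ on $j_0$, $0.48$ on $p_2$, $0.01$ on $q_2$; agent $3$ demands $0.70$ on $j_0$ and $0.01$ on each $r_k$. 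The solution $x_{j_0}=0.51$, $x_{q_1}=x_{q_2}=0.01$, $x_{r_k}=0.01$ for all $k$ has total $0.83$ and satisfies everyone (agents $1,2$ drop $p_i$, agent $3$ drops $j_0$). But in your algorithm all three agents are single-minded with heavy project $j_0$; the ``none'' branch needs $2(0.48)+2(0.01)+0.30=1.28>1$, and the ``$j_0$'' branch sets $\ell_{j_0}=0.70$, after which the forced remainder costs at least $0.01+0.01+0.29=0.31$ against a residual budget of $0.30$. Both branches fail, so you answer no on a yes-instance.

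The paper avoids this by raising $x_{j_0}$ only to the \emph{minimum} expensive residual demand on $j_0$ and then recursing on the agents whose heavy project is $j_0$. This commitment is safe -- any solution satisfying \emph{some} expensive pair on $j_0$ must spend at least that minimum on $j_0$ -- and it still spends more than $R/2$, so the budget-halving needed for your complexity charging argument is preserved; the recursion then decides for free whether to push $x_{j_0}$ higher (in the example above it correctly stops at $0.51$ rather than $0.70$). A secondary, more benign difference: the paper's greedy alternating partition can be good even for an agent whose largest residual exceeds $\sigma_i/2$ (it separates the heavy item from a second large item), whereas your dichotomy sends all such agents to the single-minded case; this by itself is not fatal, but it means your single-minded routine must handle strictly more situations, which makes the min-versus-max issue above unavoidable rather than incidental.
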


\begin{proof} %[Sketched proof] 

We will describe a recursive algorithm that takes as input an instance of the
problem as well as an initial base solution ${\bf x}$. If ${\bf x}$ already satisfies all but one demands of each agent, the algorithm exits. Otherwise the algorithm will either return a solution ${\bf x^*}$ which dominates ${\bf x}$, that is, for
all $j\in[m]$ we have $x^*_j\ge x_j$, while $\sum_j x^*_j\le 1$ (with the goal of satisfying more demands), or report
that no such ${\bf x^*}$ exists (while using only values corresponding to demands cf.~ Observation~\ref{obs2}). Initially, we set ${\bf x}$ to be the vector
that contains $0$ everywhere.

Given ${\bf x}$, we will define the residual budget $b_r({\bf x})$ (or simply
$b_r$ if ${\bf x}$ is clear from the context) as $b_r({\bf x}) = 1-\sum_j x_j$.
The residual budget is the amount that we are still allowed to use to increase
${\bf x}$ without surpassing the total budget. The high-level strategy of our
algorithm will be to branch in way that decreases the residual budget by a
constant factor in each step.

We define the residual demand of agent $i$ for project $j$ as $d_r(i,j,{\bf x})
= \max\{\dell^i_j - x_j,0\}$ and write simply $d_r(i,j)$ if ${\bf x}$ is clear
from the context. Abusing notation slightly we will write $d_r(i,J)$ for
$J\subseteq [m]$ to mean $d_r(i,J)=\sum_{j\in J} d_r(i,j)$.

One of the key intuitions of our algorithm is that as long as an agent $i$ is
not satisfied by $\bf x$, we can assume that $d_r(i,[m])\ge b_r$, that is, the
amount needed to fully satisfy $i$ is at least as high as the total remaining
budget. This is shown in the following observation.

\begin{observation}\label{obs:3} For all agents $i$ and initial vectors $\bf x$
we have $\sum_{j\in [m]} d_r(i,j,{\bf x}) \ge b_r({\bf x})$. \end{observation}

\begin{proof} Let $J\subseteq [m]$ be the set of indices such that $\dell^i_j>
x_j$, that is, the set of indices such that $d_r(i,j)>0$. We have $\sum_{j\in
[m]} d_r(i,j) = \sum_{j\in J} d_r(i,j) = \sum_{j\in J} (\dell^i_j - x_j) \ge
\sum_{j\in J} (\dell^i_j - x_j) + \sum_{j\not\in J} (\dell^i_j - x_j) = 1 -
\sum_{j\in [m]} x_j = b_r$.  \end{proof}

Note that a key part of the previous observation is that $\sum_j \dell^i_j=1$;
this is to be expected as Theorem \ref{thm:mminus1h} showed that without this
fact the problem is strongly NP-hard.

We begin our algorithm by formulating some easy base cases. First, suppose
there exists an agent $i$ such that there exists at most one project $j$ with
$\dell^i_j>x_j$, that is, an agent who is already satisfied for all but at most
$1$ project. It is not hard to see that we can safely remove this agent from
the instance, as we intend to return a solution ${\bf x^*}$ that dominates $\bf
x$. If there are no agents remaining, then we return $\bf x$ as our solution. 
Second, suppose that $b_r({\bf x})$ is too small to be of help. More precisely,
suppose that for all agents $i$ and project $j$ for which $\dell^i_j>x_j$ we
also have $\dell^i_j>x_j+b_r$. Then, we report that no solution ${\bf x}^*$ that dominates ${\bf x}$ exists, since
even if we use all our remaining budget on any project, this will not increase
the satisfaction of any agent.

Let us now proceed to the general cases. Consider an agent $i$ who currently
has a set $J$ of unsatisfied projects that contains at least two projects. We
observe that for any $J_1\subseteq J$, any feasible solution $\bf x^*$ must
have one of the following: $\bf x^*$ satisfies all demands of agent $i$ for the
set $J_1$ or $\bf x^*$ satisfies all demands of agent $i$ for $J\setminus J_1$.
This is not hard to see, because $\bf x^*$, if it is feasible, must leave at
most $1$ project unsatisfied for $i$. Our first goal then will be to attempt to
find agents $i$ so that both $J_1$ and $J\setminus J_1$ have large residual
demands for agent $i$ in the sense that increasing $\bf x^*$ to satisfy the
demands of $i$ for either set will spend a constant fraction of $b_r$.

The algorithm now proceeds as follows: for each agent $i$, let $J$ be the set
of projects with $\dell^i_j>x_j$. We produce a greedy partition of $J$ by
sorting the elements of $J$ in non-increasing order of $d_r(i,j)$, and
alternately placing elements in $J_1$ and $J\setminus J_1$. If we have
$\sum_{j\in J_1} d_r(i,j) \ge \frac{b_r}{4}$ and $\sum_{j\in J\setminus J_1}
d_r(i,j) \ge \frac{b_r}{4}$ we say that the partition is \emph{good}.

If the algorithm finds an agent $i$ such that the above procedure gives a good
partition $J_1, J\setminus J_1$, we branch into two instances: in the first
instance we set the new initial vector ${\bf x^1}$ by keeping the same values
as $\bf x$ for $j\not\in J_1$ and setting ${\bf x^1}=\dell^i_j$ for $j\in J_1$;
in the second instance we keep the values of $\bf x$ for $j\not\in J\setminus
J_1$ and set ${\bf x^2}_j = \dell^i_j$ for $j\in J\setminus J_1$. 

Suppose then that the above procedure does not produce a good partition for any
agent. We claim that in this case, for each agent $i$, there exists a unique
project $j$ with $d_r(i,j)>\frac{b_r}{2}$.

\begin{observation} \label{obser_proc_fails} If the %above 
procedure fails to produce a good partition
for agent $i$, then there exists a unique project $j$ with
$d_r(i,j)>\frac{b_r}{2}$.\end{observation}

\begin{proof} For contradiction, suppose that an agent has either at least two
projects with $d_r(i,j)>\frac{b_r}{2}$ or none. In the first case, the greedy
partition procedure will clearly place at least one project with
$d_r(i,j)>\frac{b_r}{2}$ in both $J_1$ and $J\setminus J_1$, so the partition
is good. In the latter case, we have $\sum_{j\in J_1} d_r(i,j) \ge
\sum_{j\not\in J_1} d_r(i,j) \ge \sum_{j\in J_1} d_r(i,j) - \frac{b_r}{2}$,
where the first inequality is due to the non-increasing order and the second
inequality is because the first project in the ordering has by assumption
$d_r(i,j)\le \frac{b_r}{2}$. Since by Observation \ref{obs:3} we have
$\sum_{j\in J} d_r(i,j) \ge b_r$ we get $\sum_{j\in J_1} d_r(i,j) \ge
\frac{b_r}{2}$ and $2\sum_{j\not\in J_1} d_r(i,j)\ge \frac{b_r}{2}$.  \end{proof}

We will say that a pair of an agent $i$ and project $j$ with
$d_r(i,j)>\frac{b_r}{2}$ is \emph{expensive}. It is clear that any feasible
solution may satisfy at most one expensive pair without exceeding the budget, or more precisely that  any 
solution satisfying all the agents and respecting the budget may satisfy at most one project associated with an expensive pair.

Our algorithm therefore considers the following cases: (i) we decide to satisfy
no expensive pair, that is, for each agent $i$ we increase $\bf x$ so that we
satisfy all the demands of $i$ except the expensive one (ii) we decide that if
an expensive pair is to be satisfied, it will involve some project %some 
$j$ (we
consider all such possible projects); we then treat all agents for which the
expensive project is not $j$ as in the previous case. For case (i) we have
fully specified a solution, so if it is feasible we return that. For case (ii),
we produce a new instance where the only remaining agents are those for which
the expensive project is the project $j$; we modify ${\bf x}$ by ensuring that the least expensive of the expensive pairs involving project $j$ is satisfied (spending at least $\frac{b_r}{2}$ in the process) and we call the same algorithm
recursively for each such project.

Correctness is not hard to establish for the above procedure as our algorithm
exhaustively considers all possible cases in its branching. 

Let $T(n,b)$ be a function that describes the number
of sub-instances that will be produced from an instance with $n$ agents and
residual budget $b$. For the case where we have a good partition we have
$T(n,b)\le 2T(n,\frac{3b}{4})$. For the case of expensive pairs, we produce
several instances: one for each considered expensive project $j$. It is crucial
now to notice that this produces a partition of the agents into disjoint sets,
because if an agent $i$ had two expensive projects, we would have been able to
produce a good partition for this agent. So, if the number of agents in the
produced instances is $n_1,n_2,\ldots,n_t$, we have $\sum_{i\in [t]} n_i\le n$.
As a result, in this case we have $T(n,b)\le \sum_{i\in [t]} T(n_i,b/2)$. Solving
this recurrence we get $T(n,b)\le n\cdot b^{\log_{4/3} 2}$. We now note that we
only need to run this recursion until the budget becomes low enough to reach
the base case, that is, until the budget becomes smaller than the difference
between any two elements of the input matrix, which gives the pseudopolynomial
time bound. \end{proof}

\begin{comment}
Correctness is not hard to establish for the above procedure as our algorithm
exhaustively considers all possible cases in its branching. Let us then argue
regarding its complexity. Let $T(n,b)$ be a function that describes the number
of sub-instances that will be produced from an instance with $n$ agents and
residual budget $b$. For the case where we have a good partition we have
$T(n,b)\le 2T(n,\frac{3b}{4})$. For the case of expensive pairs, we produce
several instances: one for each considered expensive project $j$. It is crucial
now to notice that this produces a partition of the agents into disjoint sets,
because if an agent $i$ had two expensive projects, we would have been able to
produce a good partition for this agent. So, if the number of agents in the
produced instances is $n_1,n_2,\ldots,n_t$, we have $\sum_{i\in [t]} n_i\le n$.
As a result, in this case we have $T(n,b)\le \sum_{i\in [t]} T(n_i,b/2)$. Solving
this recurrence we get $T(n,b)\le n\cdot b^{\log_{4/3} 2}$. We now note that we
only need to run this recursion until the budget becomes low enough to reach
the base case, that is, until the budget becomes smaller than the difference
between any two elements of the input matrix, which gives the pseudopolynomial
time bound. 
\end{comment}

\subsection{All-But-Two Scenario}

The positive result of Theorem \ref{thm:mminus1p} may tempt one to think that imposing the restriction that all demands sum up to $1$ may lead to (pseudopolynomial) tractability not only when $\tau=m-1$, but also for slightly smaller values of $\tau$. In this section we show that this is not the case by establishing that \AAS$(m-c)$ is strongly NP-complete for all fixed $c\ge 2$ and hence Theorem \ref{thm:mminus1p} represents an island of tractability. 

Let us first give some informal explanation of this result.

A natural way to try to extend the algorithm of Theorem \ref{thm:mminus1p} to the case $\tau=m-2$ would be to evenly partition the demands of an agent into three sets, knowing that the demands of one set must be satisfied. This would still work well for agents with balanced demands; however, things become significantly more complicated when agents present us with lopsided demands. In particular, consider the case of an almost single-minded agent, who requests almost all the budget for one project and small positive values for a few other projects. In the context of Theorem \ref{thm:mminus1p}, such agents are easy to handle, because for most of them we will not be able to satisfy their large request, meaning that we \emph{must} satisfy all their other demands. This is no longer true if $\tau=m-2$, as we still have the margin to reject one other demand of such an agent. Indeed, the case of almost single-minded agents closely resembles the case of Theorem \ref{thm:mminus1h}: we have agents where we need to satisfy all-but-one of their (remaining) demands, and their total demands are significantly smaller than the total budget. As a result, the basic idea of the following theorem is to take the reduction of Theorem \ref{thm:mminus1h} and augment it with a few $(c-1)$ new distinct projects for each agent, to which the agent assigns almost all her budget, but which no other agent cares about. By showing that an optimal solution would ignore these ``private'' projects, we essentially fall back to the same instances of Theorem \ref{thm:mminus1h}.

\begin{theorem}\label{thm:mminus2} 
For all fixed $c\ge 2$, \AAS$(m-c)$ is strongly NP-complete, even if all agents make a total demand of weight exactly $1$.
\end{theorem}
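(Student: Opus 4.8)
The plan is to adapt the vertex-cover reduction of Theorem~\ref{thm:mminus1h}, augmenting it with \emph{private} projects so that every agent's demands sum to exactly $1$ while the hardness is preserved. Membership in NP is immediate: by Observation~\ref{obs2} we may restrict attention to solutions whose coordinates lie in $\{0\}\cup\{d^i_j\}$, giving a polynomial-size certificate checkable in polynomial time. For hardness I would start from a \textsc{vertex cover} instance $(G=(V,E),k)$; since instances with $k\le c$ are solvable in polynomial time by brute force, I may assume $k\ge c+1$. I create one project per vertex (the \emph{vertex projects}) and, for each edge $e_i=(u_i,v_i)$, one agent together with $c-1$ fresh private projects used by no one else. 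Agent $i$ demands $\frac1k$ on each of $u_i,v_i$ and $\delta:=\frac{1-2/k}{c-1}$ on each of its private projects, and $0$ elsewhere; the total is $\frac2k+(c-1)\delta=1$, as required. Here $m=|V|+(c-1)|E|$ and $\tau=m-c$. Every agent has exactly $c+1$ projects of positive demand, and all zero-demand projects are automatically satisfied, so leaving at most $c$ unsatisfied projects is equivalent to being locally satisfied on at least one positive-demand project.

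For the forward direction, given a vertex cover $S$ with $|S|=k$, I set $x_v=\frac1k$ for $v\in S$ and $0$ everywhere else; the budget used is exactly $1$. Each agent has all $c-1$ of its private projects unsatisfied, but since $S$ covers $e_i$, at least one of $u_i,v_i$ is satisfied, leaving at most $(c-1)+1=c$ unsatisfied projects, as needed.

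The crux is the converse. Given a solution ${\bf x}$ satisfying all agents, let $S=\{v: x_v\ge \frac1k\}$ be the ``active'' vertex projects and let $A$ be the set of agents that are locally satisfied by neither of their two vertex projects. Every agent in $A$ must, by the equivalence above, be satisfied on one of its private projects, spending at least $\delta$ there; since private projects are pairwise disjoint across agents, they consume budget at least $|A|\delta$, while the vertex projects consume at least $|S|/k$. As these two sets of projects are disjoint, budget-feasibility gives $|A|\delta+|S|/k\le 1$. Now $S$ covers every edge whose agent lies outside $A$, so adding one endpoint of each $A$-edge yields a vertex cover of size at most $|S|+|A|$. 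Using $k\delta=\frac{k-2}{c-1}$, the budget inequality gives $|S|+|A|\le k+|A|\cdot\frac{c+1-k}{c-1}\le k$, where the last step uses $k\ge c+1$. Hence a vertex cover of size $k$ exists, completing the reduction. Strong NP-completeness then follows because all demands are rationals with numerators and denominators bounded polynomially (denominators $k$ and $k(c-1)$), exactly as in Theorem~\ref{thm:mminus1h}.

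The main obstacle is the converse direction: a priori an agent could ``escape'' the covering constraint by satisfying one of its private projects instead of a vertex project, which is precisely the phenomenon (almost single-minded agents with slack) that breaks the algorithm of Theorem~\ref{thm:mminus1p}. The key is the budget accounting above, which shows that each such escape is so expensive (costing at least $\delta$, with $k\delta$ close to $1$) that the number of escaping agents is tightly controlled and they can all be absorbed into the cover without exceeding size $k$. Aligning the choice of $\delta$ with the assumption $k\ge c+1$ so that the coefficient $\frac{c+1-k}{c-1}$ becomes nonpositive is the delicate quantitative point; the rest of the argument is a routine check.
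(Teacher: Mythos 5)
Your proposal is correct, and your construction coincides exactly with the paper's: the same vertex projects with demand $\frac{1}{k}$, the same $c-1$ private projects per agent with demand $\frac{k-2}{(c-1)k}$ (your $\delta$), and the same forward direction. Where you genuinely diverge is the converse, which is indeed the crux. The paper argues by an extremality/exchange argument: among all budget-feasible solutions satisfying every agent, pick one of minimum total value; if it put weight on any private project, one could zero that weight $\frac{k-2}{(c-1)k}$ and, if needed, compensate with weight $\frac{1}{k}$ on another project of the same agent, strictly decreasing the total value (the strictness is why the paper assumes $k>c+1$), contradicting minimality. Hence the extremal solution is supported only on vertex projects, and budget-feasibility immediately bounds the number of selected projects, and thus the cover, by $k$. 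Your argument instead works with an \emph{arbitrary} feasible satisfying solution: you let agents escape through private projects and charge each escape against the budget, noting that an escape costs at least $\delta$ with $k\delta=\frac{k-2}{c-1}\ge 1$ whenever $k\ge c+1$, so absorbing each escaping agent's edge into the cover by adding one endpoint never pushes $|S|+|A|$ above $k$; your computation $|S|+|A|\le k+|A|\cdot\frac{c+1-k}{c-1}\le k$ is correct. Both routes are sound and of comparable length. Yours avoids the extremal choice, is arguably more robust (it quantifies exactly how expensive escapes are rather than eliminating them), and gets by with the marginally weaker hypothesis $k\ge c+1$, since equality is harmless in the accounting while the paper's exchange needs a strict decrease. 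The paper's version buys a structurally clean witness (a solution supported only on vertex projects, with all values equal to $\frac{1}{k}$ by Observation \ref{obs2}), which makes reading off the vertex cover immediate. Your remarks on NP membership and on strong NP-completeness via polynomially bounded rationals match the paper's (largely implicit) treatment.
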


\begin{proof} We present a reduction from \textsc{vertex cover}. We are given a graph
$G=(V,E)$ and an integer $k$ and are asked if $G$ has a vertex cover of size at
most $k$. We assume without loss of generality that $k> c+1$. This can be achieved by taking the disjoint
union of $G$ with $c+1$ disjoint edges and increasing $k$ by $c+1$.

We construct an instance of our problem with $n=|E|$ agents and $m=(c-1)|E|+|V|$
projects. Suppose the edges of $E$ are numbered $e_1,e_2,\ldots,e_{|E|}$ and
the vertices are $v_1,v_2,\ldots,v_{|V|}$. Then, agent $i$ should be thought of as
representing the edge $e_i$. If $e_i=v_{i_1}v_{i_2}$, then we set the demands of the
$i$-th agent as follows: $\dell_{i_1}^i=\dell_{i_2}^i=\frac{1}{k}$; 
$\dell_{|V|+(i-1)(c-1)+1}^i=\dell_{|V|+(i-1)(c-1)+2}^i=\ldots=\dell_{|V|+i(c-1)}^i=\frac{k-2}{(c-1)k}$; and all other demands are $0$. The way to think
about this is that the first $|V|$ projects represent the vertices of $G$ and
the agent representing edge $e_i$ will assign positive demand among them only to the projects
corresponding to the endpoints of $e_i$. The remaining projects are set up in
such a way that each agent has a high demand for exactly $c-1$ distinct projects, which are only interesting to her,
and is indifferent to all the others.

If there is a vertex cover $S$ of size $k$ in $G$, then we assign for each $v_j\in S$
value $\frac{1}{k}$ to $x_j$ and value $0$ everywhere else. Consider now an
agent representing edge $e_i$. Since this agent has only $c+1$ strictly
positive demands, it is sufficient to show that one of them is satisfied.
This must be the case, because $S$ is a vertex cover, therefore contains at
least one endpoint of $e_i$, and we have assigned a value to the corresponding
project that covers the demand of the $i$-th agent. For the converse direction, suppose there is an assignment that satisfies at least $m-c$ requests of each agent, and among such assignments pick one with minimum total value. We observe that such an assignment must assign positive value only to some of the first $|V|$ projects. Indeed, otherwise we would assign value $\frac{k-2}{(c-1)k}$ to a project for which only a single agent has positive demand. By setting the value assigned to this project to $0$ we do not affect any other agent, while if this agent now has $c+1$ unsatisfied requests, we can assign value $\frac{1}{k}$ to one of her other requests. We recall that $k>c+1$ which implies that $k-2>c-1 \Rightarrow \frac{k-2}{(c-1)k} > \frac{1}{k}$, so the new solution has strictly lower total value than the original one, contradicting the original selection.
 We therefore have that a solution to the new instance must assign
positive weight, thus weight $\frac{1}{k}$ (cf. Observation \ref{obs2}), only to projects representing
vertices of $G$. Each agent then already has $c-1$ non-satisfied project (the ones
with demand $\frac{k-2}{(c-1)k}$). It must then be the case that the projects
selected form a vertex cover, as if there were an uncovered edge, the
corresponding agent would have $c+1$ unsatisfied projects.  \end{proof}

\section{Minimizing the Total Budget Spent} \label{sec:min_tot_bud}

Previous sections  %\ref{sec:fraction} and \ref{sec:sat_all_agents} 
deal with problems %cover the case 
where the total budget is at most $1$, and we  % and the individual demands of each agent  %also sum  can also be  up to $1$. 
%We 
have
seen that in many situations it can become challenging to satisfy all agents.
We now therefore consider a natural related optimization question: assuming
that the demands of each agent sum up to $1$, corresponding to the most demanding configuration imposed by (\ref{mass_l}), what is the minimum total budget that needs to be spent to satisfy everyone? 

%\commentLaurent{Do we really need to impose that the demands of each agent sum up to $1$? Can it be \emph{at most} 1?}\textcolor{red}{ML: Since we are proving a hardness result, isn't it stronger to insist that demands are equal to 1 (special case)? Similarly, for statement (3), it seems stronger to me state that we have a bad instance where the demands are exactly 1. Conversely, for (2) is seems easy to see that if we can satisfy all instances where all demands are exactly 1, we can also handle instances with lower demands, no? Should we explain all this?}

It is worth stressing that %if we pose the optimization question of minimizing
%the budget necessary to cover all agents, 
the optimal solution may have value
strictly more than $1$, but also strictly less than $1$. Indeed, when $\tau < m$, it is not hard to construct instances where one can satisfy all agents while using budget that
is $o(1)$, if for example we have agents who assign almost all their demand to
a single project and give very low demands for the others. Hence, from a
computational point of view, the fact that we can always satisfy all agents
with a budget of at most $b$ does not in itself provide a $b$-approximation for
the optimization problem, because we cannot in general lower bound the value of
the optimum.

In this section, we address the minimization of the total budget in two ways, for every $\tau \in \{1,m/2,m-1,m\}$. The first approach consists of determining the complexity of minimizing the total budget of a solution satisfying all agents, when the input is a \emph{single} instance (i.e., an $n \times m$ matrix of demands). The second approach consists of providing, for each $m$, a solution of minimum total budget satisfying all agents for the entire class of instances having $m$ projects (and any number of agents).  %The total budget of that solution must be as small as possible, and we show %give evidence 
%that our solutions are asymptotically optimal.
Our results %for both approaches 
are summarized in Table \ref{tab_total_budget}. The first line of the table  indicates that the first approach leads to NP-hard problems, except when $\tau=m$. The second line %of Table \ref{tab_total_budget} 
comprises the total budget of our solutions for the second approach, and these values are (asymptotically) tight.

\begin{table}
\begin{center}
\begin{tabular}{c|cccc}
%\hline
$\tau$&$1$&$m/2$&$m-1$&$m$\\
\hline
\small{Single instance complexity}&NP-hard&NP-hard&NP-hard&P\\
\small{(first approach)} & & & & \\
\hline
\small{Total budget upper bound}&1&$2$&$m/2$& $m$\\
\small{(second approach)}&&&&\\
%\hline
\end{tabular}
\caption{\label{tab_total_budget} Satisfying all agents with minimum total budget}

\end{center}
\end{table}

\subsection{At-Least-Half Scenario}

We consider that $\tau=m/2$ in this section. 

\begin{comment}
Results:

\begin{enumerate}

\item Computing the minimum budget needed to satisfy every agent is NP-hard.

\item There exists a simple algorithm which satisfies all agents in all
possible instances using a budget of $2$.

\item There exists an infinite family of instances where satisfying all agents
requires a budget of $2-o(1)$ (that is, the previous statement is tight).

\end{enumerate}

\end{comment}

\begin{theorem} When $\tau=m/2$, computing the minimum budget necessary to satisfy every agent
is NP-hard. \end{theorem}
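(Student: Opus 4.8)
The plan is to give a reduction from a number-theoretic NP-hard problem such as \textsc{Partition} (or \textsc{Subset Sum}). I am guided here by the fact that the statement claims only \emph{NP-hardness}, in contrast to the \emph{strong} NP-completeness proved in Theorems~\ref{thm:mminus1h} and~\ref{thm:mminus2}; this strongly suggests a weak (Turing/numeric) reduction rather than a \textsc{vertex cover} reduction. The conceptual starting point is a structural observation about which agents can matter. Since any solution has $\mathbf{x}\ge \mathbf{0}$, every coordinate on which an agent has demand $0$ is automatically locally satisfied; hence an agent with at least $m/2$ zero demands is satisfied by \emph{every} solution and imposes no constraint. Consequently the only ``binding'' agents are those with strictly more than $m/2$ positive demands. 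Because each agent's demands sum to $1$, such an agent necessarily spreads its mass over many \emph{small} values — precisely the regime in which one can hope to encode numeric quantities into the demand matrix. Throughout I would use Observation~\ref{obs2} to restrict attention to solutions whose coordinates are drawn from $\{0\}\cup\{\dell^i_j\}$.

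Concretely, given a \textsc{Partition} instance $a_1,\dots,a_q$ with $\sum_i a_i = 2A$, I would introduce for each item $i$ a pair of projects (an ``$L_i$'' and an ``$R_i$'' project), together with enough auxiliary/padding projects to fix $m$ so that the threshold $m/2$ actually bites on the binding agents. The demand vectors of the binding agents would be designed to encode the $a_i$ so that satisfying such an agent on exactly $m/2$ coordinates forces, for each item $i$, a binary ``left-or-right'' decision, i.e. it forces the solution to raise either $x_{L_i}$ or $x_{R_i}$ to the value coding $a_i$. A global balance mechanism — either a dedicated binding agent or the cost objective itself — would then tie the chosen side to the target, so that the total budget $\sum_j x_j$ equals a fixed offset plus a penalty term that vanishes exactly when the selected subset sums to $A$. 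The reduction concludes by setting a budget threshold $B^*$ for which ``minimum budget $\le B^*$'' holds iff the \textsc{Partition} instance is a yes-instance.

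For correctness, the forward direction is routine: a balanced subset yields the intended solution $\mathbf{x}$ meeting the budget $B^*$. The converse would combine Observation~\ref{obs2} with the threshold analysis to argue that any solution of budget at most $B^*$ must realize one of the intended per-item binary patterns and hence induces a subset summing to $A$. I expect the main obstacle to be exactly the engineering of the gadget: one must choose the demands (each summing to $1$) and the padding so that (a) the $\tau=m/2$ constraint forces \emph{precisely} the intended per-item choice, with no cheaper feasible coverage pattern available, and (b) the optimum value of $\sum_j x_j$ cleanly separates yes- from no-instances. The delicate point in (b) is ruling out ``shortcut'' solutions that exploit the very feature noted above, namely that the encoding demands of binding agents are necessarily small and typically shared across agents; controlling these shortcuts is the same phenomenon that makes the naive \textsc{vertex cover}-style constructions of Theorems~\ref{thm:mminus1h} and~\ref{thm:mminus2} fail to transfer to $\tau=m/2$, and overcoming it is where the real work of the proof lies.
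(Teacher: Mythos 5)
Your proposal is not a proof; it is a plan for a proof in which the one component that would carry all the mathematical content --- the gadget --- is never constructed. You describe what the demand vectors of the binding agents ``would be designed'' to do (force a binary left-or-right choice per item of the \textsc{Partition} instance), but you never write down a single demand matrix, never specify the padding, never fix the threshold $B^*$, and never verify either direction of the equivalence. You yourself identify the two places where such constructions typically collapse --- (a) making the $\tau=m/2$ constraint force \emph{exactly} the intended per-item pattern, and (b) excluding shortcut solutions that mix small shared demand values across agents --- and then state that overcoming them ``is where the real work of the proof lies.'' That is precisely the work a proof must contain; as submitted, the argument has a gap exactly at its core, and there is no evidence the numeric gadget can be made to work at all.

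It is also worth flagging that the inference guiding your choice of source problem is a misreading. The theorem statement says ``NP-hard'' only because that is all the authors chose to claim; the paper's actual proof is a reduction from \textsc{Independent Set} in which every demand equals $\frac{1}{\mathscr{N}-2}$ or $0$ --- all numbers polynomially bounded --- so it in fact establishes \emph{strong} NP-hardness, and no numeric encoding is needed. The paper's trick for making the threshold $m/2$ ``bite'' is purely combinatorial: with $\mathscr{N}$ vertices and target independent set size $k$, one adds $\mathscr{N}-2k-2$ dummy projects (demand $0$ for everyone) so that $m=2\mathscr{N}-2k-2$; an edge-agent is then automatically happy on the dummies and on its two endpoint projects, and needs exactly $k-1$ further interesting projects funded at $\frac{1}{\mathscr{N}-2}$, which is possible within budget $\frac{k}{\mathscr{N}-2}$ precisely when the $k$ funded projects form an independent set. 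Tuning the amount of zero-demand padding, rather than encoding integers into small demands, is the mechanism your approach is missing, and it resolves the ``shortcut'' problem you correctly worried about but did not solve.
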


\begin{proof}

We present a reduction from \textsc{Independent Set}. We are given a graph
$G=(V,E)$ and an integer $k$ and are asked if $G$ has an independent set of
size at least $k$.  To avoid confusion we denote the number of vertices and
edges of $G$ as $\mathscr{N}=|V|$ and $\mathscr{M}=|E|$. %\commentLaurent{$N$ is the set of agents so I suggest %to change $N$ for $\mathcal{N}$.} 
Assume without loss of generality that
$k<\mathscr{N}/2$ (otherwise we can add universal vertices to $G$; since these vertices
are connected to everything they do not increase the size of the maximum
independent set). 

We construct an instance with $n=\mathscr{M}$ agents, and $m=2\mathscr{N}-2k-2$ projects, that is,
we construct one agent for each edge. The projects can be thought of as being
partitioned into $\mathscr{N}$ ``interesting'' projects and $\mathscr{N}-2k-2$ ``non-interesting''
projects. All agents will assign value $0$ to all non-interesting projects.

To describe the values assigned to the interesting projects, number the
vertices of $G$ with integers from $\{1,\ldots,\mathscr{N}\}$. Each vertex corresponds to
an interesting project. Consider the edge $e_i$ whose endpoints are $i_1,i_2$.
Then, the $i$-th agent of our instance has demand $\frac{1}{\mathscr{N}-2}$ for all
interesting projects, except projects $i_1$ and $i_2$ for which the agent has
demand $0$.

This completes the construction and we claim that it is possible to satisfy all
agents with a budget of $\frac{k}{\mathscr{N}-2}$ if and only if $G$ has an independent
set of size at least $k$. For one direction, given an independent set of size $k$,
we assign value $\frac{1}{\mathscr{N}-2}$ to all interesting projects corresponding to
vertices of the independent set and $0$ to all other projects. We claim that
this satisfies all agents. Indeed, consider an agent representing the edge
$e_i$. This agent is happy with the assignment to the $\mathscr{N}-2k-2$ non-interesting
projects, as well as the two interesting projects corresponding to the
endpoints of $e_i$. Furthermore, this agent is happy with the assignment to at
least $k-1$ interesting projects (distinct from the
endpoints of $e_i$), since the independent set cannot contain both
endpoints of $e_i$. Hence, the agent is happy with the assignment to
$\mathscr{N}-k-1=m/2$ projects and is satisfied by the proposed solution.

For the converse direction, it is not hard to see that the optimal assignment
will give $0$ to all non-interesting projects and value either $\frac{1}{\mathscr{N}-2}$
or $0$ to each interesting project  (cf. Observation \ref{obs2}). Furthermore, we can assume without loss of
generality that the best assignment uses the full budget. Therefore, if there
is a satisfying assignment with budget $\frac{k}{\mathscr{N}-2}$, it must have assigned
strictly positive value to exactly $k$ interesting projects. We claim that the
$k$ corresponding vertices are an independent set of $G$.  Indeed, suppose that
$e_i$ has both endpoints in this set. Then, the agent corresponding to $e_i$
would not be satisfied, as the proposed solution would only cover $k$ of this
agent's demands for interesting projects.  \end{proof}

\begin{theorem} When $\tau=m/2$, the solution ${\bf x}$ that sets $x_j=\frac{2}{m}$ for all $j\in[m]$ satisfies all agents. In addition, there exists, for each $m$,  an instance with $m$ projects, such 
that the minimum budget necessary to satisfy all agents is at least
$2-\frac{6}{m}$.
\end{theorem}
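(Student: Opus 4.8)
I would prove the two parts separately.

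\medskip
\noindent\textbf{The uniform solution (first part).} First I would verify by a counting argument that $x_j=\frac2m$ for all $j$ satisfies every agent. Fix an agent $i$ and let $k$ be the number of projects $j$ with $\dell^i_j>\frac2m$. If $k\ge \frac m2$, then $\sum_{j}\dell^i_j\ge\sum_{j:\dell^i_j>2/m}\dell^i_j> k\cdot\frac2m\ge\frac m2\cdot\frac2m=1$, contradicting $\sum_j\dell^i_j\le 1$. Hence $k<\frac m2$, so the number of projects with $\dell^i_j\le\frac2m=x_j$ is $m-k>\frac m2\ge\tau$; agent $i$ is locally satisfied on each of them and is therefore satisfied. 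The total value of this solution is $m\cdot\frac2m=2$, which also re-establishes the corresponding entry of Table~\ref{tab_total_budget}.

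\medskip
\noindent\textbf{The lower-bound instance (second part).} Here I need an explicit instance all of whose feasible solutions have total value at least $2-\frac6m$. By Observation~\ref{obs2} I may restrict to solutions whose coordinates take values among the demands, so the question becomes a finite combinatorial minimization. Reading the target value as $2-\frac6m=(m-3)\cdot\frac2m$ suggests the right structure: design demands so that an optimal solution can zero out only a bounded number (three) of coordinates and is forced to spend about $\frac2m$ on each of the remaining $m-3$ coordinates. The plan is thus to build an instance in which no feasible solution can leave more than three coordinates ``cheap'', and then sum $\frac2m$ over the at least $m-3$ remaining coordinates.

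\medskip
\noindent\textbf{Main obstacle and how I would attack it.} The difficulty — where I expect most of the work to lie — is that the normalization $\sum_j\dell^i_j=1$ makes any \emph{single} agent cheap to satisfy: her $\frac m2$ smallest demands sum to at most $\frac12$, so one agent alone never forces budget above $\frac12$. Moreover, a short calculation (characterizing, for a sorted solution, the worst permutation of a demand profile via a matching argument and optimizing the resulting profile) shows that fully symmetric instances built from a \emph{single} permutation-closed profile force budget at most about $\tfrac1{\ln 2}\approx1.44$, strictly below $2$. Consequently the construction must be genuinely asymmetric and must prevent the solver from ``sharing'' one small set of expensive coordinates across all agents. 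My plan is to superpose several threshold-type families of agents tuned to different coordinate ranges, each cheap in isolation but calibrated with values slightly above and below $\frac2m$ (and exploiting the sum-to-one constraint) so that no single sparse vector can meet the ``at least half'' requirement of all families simultaneously. I would then take an arbitrary feasible $\mathbf{x}$, sort its coordinates, and show that satisfying every family pushes all but a constant number of coordinates above roughly $\frac2m$, yielding $\sum_j x_j\ge 2-\frac6m$. The delicate steps are choosing the exact demand values so that each agent's demands sum to exactly $1$ while the covering constraints remain non-shareable, and pinning down the additive term $\frac6m$, which should correspond precisely to the three coordinates the construction is forced to leave free.
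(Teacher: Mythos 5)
Your first part is correct and is essentially the paper's own argument (the paper states it contrapositively: an unsatisfied agent would need demand strictly above $\frac{2}{m}$ on at least $\frac{m}{2}$ projects, hence total demand strictly above $1$).

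The second part, however, has a genuine gap: you never exhibit an instance and never prove the bound. What you give is a plan whose ``delicate steps'' (choosing the exact demand values, pinning down the term $\frac{6}{m}$) are explicitly deferred, and those steps are precisely the content of the theorem. The paper's construction avoids the difficulty you wrestle with by brute richness: for $m$ even, it includes \emph{one agent for every} vector of demands that are non-negative integer multiples of $\frac{1}{m^2}$ summing to $1$ (an instance of size exponential in $m$, which is allowed since only existence is claimed). Given any solution ${\bf x}$, with coordinates assumed w.l.o.g.\ to be multiples of $\frac{1}{m^2}$ and sorted non-decreasingly, either (i) $\sum_{i=1}^{m/2+1} x_i \le 1-\frac{m/2+1}{m^2}$, in which case the instance contains an agent demanding $x_i+\frac{1}{m^2}$ on each of the $m/2+1$ cheapest coordinates (these demands sum to at most $1$, so such an agent exists), and that agent is unsatisfied on a strict majority of projects no matter what the rest of ${\bf x}$ does; or (ii) $\sum_{i=1}^{m/2+1} x_i > 1-\frac{m/2+1}{m^2}$, and then, since the largest $m/2+1$ coordinates sum to at least as much as the smallest $m/2+1$ and the overlap satisfies $x_{m/2}+x_{m/2+1}\le \frac{4}{m}$ by sortedness, the total budget is at least $2\bigl(1-\frac{m/2+1}{m^2}\bigr)-\frac{4}{m}\ge 2-\frac{6}{m}$. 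The decisive idea missing from your plan is this \emph{shadowing} agent: because every grid profile is present, an adversary tracking the solution's $m/2+1$ cheapest coordinates always exists, so no cleverly tuned finite family of ``threshold'' agents needs to be designed at all.

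Two of your guiding heuristics also point in the wrong direction. First, the numerological reading $2-\frac{6}{m}=(m-3)\cdot\frac{2}{m}$ (three ``free'' coordinates, the rest at $\frac{2}{m}$) is not where the slack comes from; in the paper's accounting it arises as $\frac{2(m/2+1)}{m^2}+\frac{4}{m}\le\frac{6}{m}$. Second, your conclusion that the construction ``must be genuinely asymmetric'' is contradicted by the paper's instance, which is fully permutation-closed; what is needed is not asymmetry but richness --- enough distinct profiles that no single sparse set of expensive coordinates can serve all agents, which the full discretized simplex achieves automatically.
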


\begin{proof}

We observe that if there was an agent not satisfied by ${\bf x}$,  
%this assignment, 
this agent would have demand strictly more than $1$. Indeed, for such an agent there must be at least $m/2$ projects for which the agent has demand strictly more
than ${2}/{m}$.  %\end{proof}

%\begin{theorem} There exists, for each $m$,  an instance with $m$ %projects, such
%that the minimum budget necessary to satisfy all agents is at least
%$2-\frac{6}{m}$.  \end{theorem}

%\begin{proof}

We will prove the lower bound of $2-6/m$ %statement 
for $m$ even. If $m$ is odd, then we can construct
the instance we would have built for $m-1$ and add a project for which all
agents have demand $0$. Since in the smaller instance, any solution with the
claimed budget will have an agent who is unhappy for at least $\frac{m-1}{2}+1
= \frac{m+1}{2}$ projects, this agent is still unsatisfied in the new instance.

We will only consider agents that assign demands which are integer multiples of
$\frac{1}{m^2}$. For each vector  of $m$ non-negative integers $\{d_1,\ldots,d_m\}$
such that $\sum_{j=1}^m d_j = m^2$, construct an agent with index $i$ and demands $\dell^i_j =
\frac{d_j}{m^2}$ for all $j\in[m]$. Note that the instance we are constructing has size
exponential in $m$. 

%\commentLaurent{Notations are a bit confusing because %a project has index $j$ and an agent has index $i$. I %would say: $\sum_{j=1}^m d_j = m^2$ and %$\dell_j^i=\frac{d_j}{m^2}$ for all $j \in[m]$.}

We now claim that if we have a solution with budget at most $2-\frac{6}{m}$,
then some agent is unsatisfied. To see this, take such a solution ${\bf x}$
and assume without loss of generality that $x_i\le x_{i+1}$ for all $i$, that
is, the projects are sorted in non-decreasing order according to the value
assigned by the optimal solution. We will show that ${\bf x}$ leaves some agent unsatisfied. 

%\commentLaurent{``optimal solution'' in which sense? %Maximize the number of covered agents? I am a bit lost %with the high level proof technique. Do you assume %that you can cover everybody with $<2-6/m$ and then %you get a contradiction? Or, do you assume that you %cover everybody and then budget $\ge 2-6/m$?}

Consider now the following case: if $\sum_{i=1}^{m/2+1} x_i \le 
1-\frac{m/2+1}{m^2}$, then we claim that an unsatisfied agent exists.  Indeed,
without loss of generality the optimal solution sets each $x_i$ to be an
integer multiple of $\frac{1}{m^2}$ for all projects, so there must exist an
agent, say with index $i$, who sets $\dell^i_j=x_j+\frac{1}{m^2}$ for all
$j\in\{1,\ldots,m/2+1\}$.  
Note that we have that $\sum_{j=1}^{m/2+1}\dell^i_j \le
1-\frac{m/2+1}{m^2}+\frac{m/2+1}{m^2} = 1$, hence such an agent exists and is
unsatisfied regardless of the assignment to the remaining projects.

If all agents are satisfied, we therefore must have that $\sum_{i=1}^{m/2+1} x_i > 1-\frac{m/2+1}{m^2}$.
This implies that $\sum_{i=m/2+2}^m x_i \le 2-\frac{6}{m}-1+\frac{m/2+1}{m^2} <
1- \frac{2}{m}$. Because the $x_i$'s are sorted in non-decreasing order we have
that $x_{m/2}+x_{m/2+1}\le 2\cdot \frac{\sum_{i=m/2+2}^m x_i}{m/2-1}\le 4\cdot
\frac{1-\frac{2}{m}}{m-2} \le \frac{4}{m}$.  But then we have that 
$\sum_{i=1}^m x_i = \sum_{i=1}^{m/2+1} x_i + \sum_{i=m/2}^{m} x_i -
(x_{m/2}+x_{m/2+1})$ $\ge 2\left(\sum_{i=1}^{m/2+1} x_i\right) - \frac{4}{m} \ge 2-\frac{6}{m}$
where we have used that since the $x_i$'s are sorted the second sum
is at least as large as the first. We conclude that if a solution satisfies all agents, it must assign sufficient values to the first $\frac{m}{2}+1$ projects to avoid the case of the previous paragraph (which leaves an agent unsatisfied), therefore it must be the case that the total budget used is more than $2-\frac{6}{m}$. \end{proof}

\begin{comment}
\begin{eqnarray*}
\sum_{i=1}^m x_i &=& \sum_{i=1}^{m/2+1} x_i + \sum_{i=m/2}^{m} x_i -
(x_{m/2}+x_{m/2+1})\\ 
&\ge& 2\left(\sum_{i=1}^{m/2+1} x_i\right) - \frac{4}{m} \ge 2-\frac{6}{m}
\end{eqnarray*}
\end{comment}

\subsection{All-But-One Scenario}

In this section we suppose that $\tau=m-1$.

\begin{theorem} \label{thm:5.3} When $\tau=m-1$, computing the minimum budget necessary to satisfy every agent is NP-hard.  \end{theorem}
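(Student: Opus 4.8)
The plan is to reduce from \textsc{Vertex Cover}, which is NP-complete: given a graph $G=(V,E)$ and integer $k$, we ask whether $G$ has a vertex cover of size at most $k$. Since we only need NP-hardness, a reduction that makes the \emph{value} of the minimum budget encode the vertex-cover number suffices, and this is consistent with the pseudopolynomial decision algorithm of Theorem~\ref{thm:mminus1p}, because that algorithm only tests the specific threshold $1$, whereas here the relevant budget threshold will typically exceed $1$. First I would discard isolated vertices (no agent ever places positive demand on them, so they may be set to $0$) and assume $|V|\ge k+1$, as otherwise $V$ itself is a vertex cover of size $\le k$ and the instance is trivially a yes-instance.

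The construction I propose uses one project per vertex plus a single extra \emph{shared} project $D$, so $m=|V|+1$, and one agent per edge, so $n=|E|$. For the agent representing edge $e=\{u,v\}$ I set $\dell_u=\dell_v=\tfrac{2}{5}$ and $\dell_D=\tfrac{1}{5}$, with all remaining demands $0$; note each agent's demands sum to exactly $1$ (tight demands) and that $\tau=m-1$ means an agent is satisfied iff at most one of its positive demands $\{u,v,D\}$ is violated. I would then claim that the minimum budget is at most $B:=\tfrac15+\tfrac25 k$ if and only if $G$ has a vertex cover of size at most $k$. For the forward direction, given a vertex cover $S$ with $|S|\le k$, the solution $x_D=\tfrac15$, $x_u=\tfrac25$ for $u\in S$, and $0$ elsewhere satisfies every agent: each edge has an endpoint in $S$ and $D$ is covered, leaving at most one violated project; its budget is $\tfrac15+\tfrac25|S|\le B$.

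For the converse I would invoke Observation~\ref{obs2} to restrict the solution values to $\{0,\tfrac15,\tfrac25\}$, and then split into two cases according to whether $D$ is covered. If $x_D=0$, then $D$ is a violated project for \emph{every} agent, so each agent must have both of its endpoint-projects covered; this forces $x_v=\tfrac25$ for all (non-isolated) vertices, giving budget $\tfrac25|V|\ge \tfrac25(k+1)=\tfrac25 k+\tfrac25 > B$, a contradiction. Hence any solution of budget $\le B$ must set $x_D=\tfrac15$, and then each agent needs at least one endpoint covered, so the set $S=\{u:x_u=\tfrac25\}$ is a vertex cover; its cost is $\tfrac15+\tfrac25|S|\le B$, forcing $|S|\le k$. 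The main point to get right is exactly this cost comparison in the backward direction: the demand split $(\tfrac25,\tfrac25,\tfrac15)$ is chosen so that the ``do not cover $D$'' strategy, which would force covering all vertices, is strictly more expensive than $B$; this needs both that the weight on an endpoint strictly exceeds $1/3$ (here $2/5>1/3$, so $3a>1$ with $a=2/5$) and the normalization assumption $|V|\ge k+1$. Everything else is routine, and I would conclude that minimizing the total budget when $\tau=m-1$ is NP-hard.
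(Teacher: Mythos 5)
Your proof is correct and follows essentially the same route as the paper: a reduction from \textsc{Vertex Cover} with one agent per edge placing its positive demands on its two endpoint projects, so that $\tau=m-1$ makes an agent satisfied exactly when at least one endpoint is funded, and the minimum budget then encodes the vertex-cover number (with Observation~\ref{obs2} discretizing the solution values). The only real difference is your auxiliary shared project $D$ and the resulting case analysis on $x_D$, which the paper renders unnecessary by giving each edge-agent demands $\bigl(\tfrac12,\tfrac12\bigr)$ on its endpoints — already tight — so that a satisfying solution of budget $B/2$ corresponds directly to a vertex cover of size $B$.
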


\begin{proof}
We present a reduction from \textsc{vertex cover}. Given a graph $G=(V,E)$ and an integer $k$, we are asked if $V$ has a subset $C$ such that $\{u,v\} \cap C \neq \emptyset$ for every edge $\{u,v\} \in E$, and $|C| \le k$. 

Take an instance of \textsc{vertex cover} and create an instance of the budget division problem where the goal is to satisfy all agents with a minimum total budget.  Every vertex $v_j \in V$ is associated with a project $j$. Every edge $\{v_j,v_{j'}\} \in E$ corresponds to an agent whose demands are $1/2$ for projects $j$ and $j'$, and zero for the other projects. By Observation \ref{obs2}, the coordinates of any solution ${\bf x}$ to this instance of budget are either 0 or $1/2$. Since $\tau=m-1$, an agent associated with $\{v_j,v_{j'}\}$ is satisfied by ${\bf x}$ if $\max \{x_j,x_{j'}\}=1/2$. 

Then, it is not difficult to see that a solution ${\bf x}$ satisfying all agents and having a total budget of $B/2$ corresponds to a vertex cover $C=\{v_j : x_j=1/2\}$ of $G$ of cardinality $B$. \end{proof}

\begin{theorem} When $\tau=m-1$, the solution ${\bf x} $ that sets $x_j=\frac{1}{2}$ for all
$j\in[m]$ has total budget $\frac{m}{2}$ and ${\bf x}$ satisfies all agents. In addition,  there exists, for each $m$,  an instance with $m$ projects, such
that the minimum budget necessary to satisfy all agents is at least
$\frac{m}{2}-\frac{1}{2m}$.  \end{theorem}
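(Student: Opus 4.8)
The statement has two parts. The first is immediate: the solution $x_j=\frac12$ for all $j$ has total budget $m\cdot\frac12=\frac m2$, and it satisfies every agent because an agent $i$ can fail to be locally satisfied on a project $j$ only when $\dell^i_j>\frac12$; since $\sum_{j\in[m]}\dell^i_j=1$, at most one project can carry a demand exceeding $\frac12$ (two such demands would already sum to more than $1$). Hence each agent has at most one unmet demand, i.e.\ at least $m-1$ met demands, and is therefore satisfied when $\tau=m-1$.

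For the lower bound the plan is to reuse \emph{the very instance} built in the $\tau=m/2$ lower bound: fix the granularity $\frac1{m^2}$ and create one agent for every vector of non-negative integers $(a_1,\dots,a_m)$ with $\sum_j a_j=m^2$, whose demands are $\dell^i_j=a_j/m^2$; as there, this instance has size exponential in $m$, which is permitted. By Observation~\ref{obs2} I may assume that a candidate solution ${\bf x}$ has every coordinate in $\{0\}\cup\{\dell^i_j\}$, hence a multiple of $\frac1{m^2}$, and I sort the coordinates as $x_1\le x_2\le\cdots\le x_m$.

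The core step is a lower bound on the two smallest coordinates: any ${\bf x}$ satisfying all agents has $x_1+x_2\ge 1-\frac1{m^2}$, which I would prove by contraposition. Suppose $x_1+x_2\le 1-\frac2{m^2}$ and let $p,q$ be the projects carrying $x_1,x_2$. The demand vector placing $x_1+\frac1{m^2}$ on $p$, placing $1-x_1-\frac1{m^2}$ on $q$, and $0$ elsewhere is a legitimate agent of the instance (all values are non-negative multiples of $\frac1{m^2}$ summing to $1$), and both of its positive demands strictly exceed the corresponding coordinate of ${\bf x}$ (for $q$ because $1-x_1-\frac1{m^2}\ge x_2+\frac1{m^2}>x_2$). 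Thus this agent has two unmet demands and is unsatisfied when $\tau=m-1$, a contradiction. Since $x_1+x_2$ is a multiple of $\frac1{m^2}$, ruling out $x_1+x_2\le 1-\frac2{m^2}$ forces $x_1+x_2\ge 1-\frac1{m^2}$.

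Finally I convert this into a bound on the total budget. From $x_1\le x_2$ and $x_1+x_2\ge 1-\frac1{m^2}$ I get $x_2\ge\frac12\big(1-\frac1{m^2}\big)$, and using $x_j\ge x_2$ for all $j\ge 2$,
\[
\sum_{j=1}^m x_j \ge (x_1+x_2)+(m-2)x_2 \ge \Big(1-\tfrac1{m^2}\Big)\Big(1+\tfrac{m-2}{2}\Big) = \Big(1-\tfrac1{m^2}\Big)\tfrac m2 = \tfrac m2-\tfrac1{2m}.
\]
The granularity $\frac1{m^2}$ is chosen precisely so that the slack $1-\frac1{m^2}$ produces exactly the claimed value $\frac m2-\frac1{2m}$, and since parity is never used, no separate treatment of odd $m$ is needed. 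I expect the only delicate points to be the existence of the ``bad'' agent (verifying the supporting demands are non-negative multiples of $\frac1{m^2}$ summing to $1$ and that both strictly exceed $x_p,x_q$) and the legitimacy of reducing to multiples-of-$\frac1{m^2}$ solutions via Observation~\ref{obs2}; once these are in place, the averaging over the sorted coordinates is routine.
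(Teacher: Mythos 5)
Your proof is correct, but it takes a genuinely different route from the paper's. For the lower bound, the paper builds a \emph{polynomial-size} instance: for every ordered pair of distinct projects $(j,j')$ it adds one agent demanding $\frac{1}{2}$ on both $j$ and $j'$, and one agent demanding $\epsilon$ on $j$ and $1-\epsilon$ on $j'$ (zero elsewhere). The first family forces all but at most one coordinate of a satisfying solution to be at least $\frac{1}{2}$; the second family forces, in the case where the exceptional coordinate is below $\epsilon$, every other coordinate up to $1-\epsilon$. This yields a budget of at least $\min\{\epsilon+\frac{m-1}{2},\,(m-1)(1-\epsilon)\}$, and choosing $\epsilon=\frac{1}{2}-\frac{1}{2m}$ balances the two cases at exactly $\frac{m}{2}-\frac{1}{2m}$. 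You instead recycle the exponential-size ``all demand vectors quantized at $\frac{1}{m^2}$'' instance from the paper's $\tau=m/2$ budget lower bound, prove via Observation~\ref{obs2} and an exchange argument that the two smallest coordinates must sum to at least $1-\frac{1}{m^2}$, and finish by sorting and averaging; your verification of the ``bad'' agent's legitimacy and of the strict violations is sound, and the arithmetic $\left(1-\frac{1}{m^2}\right)\frac{m}{2}=\frac{m}{2}-\frac{1}{2m}$ lands exactly on the claimed constant. The trade-off: the paper's construction uses only $O(m^2)$ agents, which is a stronger existence statement and keeps the witness instance efficiently describable, but it requires tuning the parameter $\epsilon$ and a case analysis specific to this scenario; your argument needs exponentially many agents, but it is uniform with the at-least-half scenario, parameter-free, and its key lemma (a lower bound on the sum of the two smallest coordinates) is a clean, reusable statement.
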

\begin{proof}
The solution with $1/2$ on each coordinate satisfies all agents because (\ref{mass_l}) implies that  an agent's demand can exceed $1/2$ for at most one project. 

For the second part,  consider an instance such that, for every pair of distinct projects $(j,j')$, there exists:
\begin{itemize}
\item an agent whose demands for $j$ and $j'$ are equal to $1/2$, and 0 for the other projects, 
\item an agent whose demands for $j$ and $j'$ are equal to $\epsilon$ and  $1-\epsilon$, respectively, and 0 for any other project,  where $\epsilon \in (0,1/2)$. 
\end{itemize}
Let ${\bf x}^*$ be a solution satisfying all the agents when $\tau=m-1$.  
At most one coordinate of ${\bf x^*}$, say $1$ w.l.o.g., can be strictly smaller than $1/2$ because of the agents having two demands of $1/2$. If $x^*_1$ is strictly smaller than $\epsilon$, then every other coordinate must be at least $1-\epsilon$ in order to satisfy the agents having a demand of $\epsilon$ on project $1$, and a demand of $1-\epsilon$ on some project $j'\neq 1$.     

So, the total budget of ${\bf x^*}$ is either  $\epsilon + \frac{m-1}{2}$ or $(m-1)(1-\epsilon)$, corresponding to solutions $(\epsilon,\frac{1}{2},\frac{1}{2},\ldots,\frac{1}{2})$ and $(0,1-\epsilon,1-\epsilon,\ldots,1-\epsilon)$, respectively. Our best lower bound on the total budget of ${\bf x^*}$, i.e., $\min \{\epsilon + \frac{m-1}{2},(m-1)(1-\epsilon)\}$, is obtained when $\epsilon + \frac{m-1}{2}=(m-1)(1-\epsilon) \Leftrightarrow \epsilon=\frac{1}{2}-\frac{1}{2m}$. This gives a lower bound of $\frac{m}{2}-\frac{1}{2m}$.  
\end{proof}

 \subsection{Extreme Values of the Threshold} \label{sec:extreme:minbudget}

Let us first consider the case  $\tau=1$. 

\begin{theorem} \label{m_egal_one} When $\tau=1$, computing the minimum budget necessary to satisfy all agents 
is NP-hard. \end{theorem}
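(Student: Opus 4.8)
The plan is to reduce from \textsc{vertex cover}, in the spirit of Theorem \ref{thm:5.3}, but adapting the gadget to the fundamentally different behaviour of the threshold $\tau=1$. The first thing I would flag is the crucial obstacle that makes the clean edge-gadget of the $\tau=m-1$ proof unusable here: when $\tau=1$, \emph{any} agent who has a zero demand on some project is automatically satisfied by the all-zero vector $\mathbf{x}=\mathbf{0}$ at cost $0$. Hence every agent must be given a \emph{strictly positive} demand on every project. Concretely, given $G=(V,E)$ with $\mathscr{N}=|V|$ and a target size $k$, I would build one project per vertex and one agent per edge; for an edge $e=\{v_a,v_b\}$ the corresponding agent receives demand $s$ on the two endpoint-projects $a,b$ and demand $L$ on each of the remaining $\mathscr{N}-2$ projects, where $0<s<L$ and $2s+(\mathscr{N}-2)L=1$ so that demands sum to exactly $1$. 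The parameter $s$ is taken very small, e.g.\ $s=1/\mathscr{N}^2$, which forces $L=(1-2s)/(\mathscr{N}-2)$.

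For correctness I would first invoke Observation \ref{obs2} to restrict every coordinate of a candidate solution to $\{0,s,L\}$. Raising a coordinate to $s$ covers exactly the agents incident to that vertex, since only they have demand $s$ (rather than $L$) on that project; setting a coordinate to $0$ covers nobody, precisely because all demands are strictly positive. Thus, as long as no coordinate is raised to $L$, the set $S$ of coordinates set to $s$ must hit every edge, i.e.\ be a vertex cover, and the budget spent is $s\,|S|$, minimized at $s\cdot\mathrm{vc}(G)$. Conversely, a vertex cover of size $k$ yields the solution that assigns $s$ to its projects and $0$ elsewhere, satisfying every agent at total cost $sk$. This would give the equivalence ``$\mathrm{vc}(G)\le k$ if and only if there is a satisfying solution of budget at most $sk$,'' modulo excluding the shortcut discussed next.

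The main obstacle, and the reason for choosing $s$ tiny, is a shortcut peculiar to $\tau=1$: since $L$ is the largest demand appearing anywhere and every agent has demand at most $L$ on every project, raising a \emph{single} coordinate to $L$ already satisfies all agents, at cost $L$. I must therefore guarantee that no solution using a coordinate equal to $L$ ever beats the vertex-cover solution. This is exactly what the condition $s\mathscr{N}<L$ buys: because $\mathrm{vc}(G)\le\mathscr{N}$, we obtain $s\cdot\mathrm{vc}(G)\le s\mathscr{N}<L$, so any solution using an $L$-coordinate is strictly more expensive than the optimum, and the minimum budget equals $s\cdot\mathrm{vc}(G)$ exactly. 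The inequality $s\mathscr{N}<L=(1-2s)/(\mathscr{N}-2)$ holds for all sufficiently small $s$ (for instance $s=1/\mathscr{N}^2$), which additionally keeps every value polynomially bounded in $\mathscr{N}$. Verifying this single inequality, together with the two routine coverage directions, completes the reduction; I expect the bookkeeping around ruling out the $L$-shortcut to be the only genuinely delicate point.
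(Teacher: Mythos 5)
Your proof is correct and takes essentially the same approach as the paper: a reduction from \textsc{vertex cover} with one agent per edge, a small demand on the two endpoint projects and a large demand on every other project so that demands sum to $1$ — indeed, with $s=1/\mathscr{N}^2$ your values coincide exactly with the paper's $\frac{1}{m^2}$ and $\frac{m^2-2}{(m-2)m^2}$. Your explicit handling of the single-large-coordinate shortcut (via $s\mathscr{N}<L$) plays the same role as the paper's assumption $k<m+2+\frac{2}{m-2}$, which likewise rules out any use of the large value within the target budget.
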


\begin{proof} 
We present a reduction from \textsc{vertex cover}. We are given a graph $G=(V,E)$ and an integer $k$ and are asked if $G$ has a vertex cover of size $k$. We will construct an instance of our problem with one agent for each edge of $E$ and one project for each vertex of $V$, so to simplify notation assume that $n=|E|$ and $m=|V|$.

Suppose that the edges of $G$ are numbered $e_1,\ldots,e_n$ and the vertices are numbered $v_1,\ldots,v_m$. In the new instance agent $i$ will represent the edge $e_i$. If the endpoints of $e_i$ are the vertices $v_{j_1},v_{j_2}$, then we set $d^i_{j_1}=d^i_{j_2}=\frac{1}{m^2}$, while for all other $j\in [m]\setminus\{j_1,j_2\}$ we set $d^i_j=\frac{m^2-2}{(m-2)m^2}$. Observe that since each edge has exactly two endpoints, we have $\sum_{j\in[m]}d^i_j=1$ for all $i\in [n]$. This completes the construction and we claim that $G$ has a vertex cover of size $k$ if, and only if, it is possible to satisfy at least one demand of each agent with a total budget of $\frac{k}{m^2}$.

For one direction, if we have a vertex cover $S$ of size $k$, we assign value $\frac{1}{m^2}$ to each project $j\in[m]$ such that $v_j\in S$ and value $0$ to other projects. We claim this satisfies all agents, since each edge $e_i$ has an endpoint, say $v_j$, that belongs in $S$, therefore agent $i$ has demand $\frac{1}{m^2}$ for project $j$, which was assigned exactly this value.

For the converse direction, we first observe that $\frac{k}{m^2}<\frac{m^2-2}{(m-2)m^2}$ $\Leftrightarrow k<m+2 +\frac{2}{m-2}$, which we can assume without loss of generality, otherwise the original instance is trivial. As a result, a solution that satisfies at least one demand of each agent can be assumed to only use values $\frac{1}{m^2}$ and $0$ (cf. Observation~\ref{obs2}), therefore if the total budget is $\frac{k}{m^2}$ we can construct a vertex cover $S$ by placing $v_j$ in $S$ whenever the solution assigns positive value to project $j$. The set $S$ will be a vertex cover, because if edge $e_i$ were left uncovered this would imply that agent $i$ had all her demands unsatisfied. 
\end{proof}

We know from Proposition \ref{prop:tau=1} that when $\tau=1$, all agents are satisfied by an arbitrary solution of total budget 1. Let us prove that 1 is the minimum budget. Suppose by contradiction that there exists a solution ${\bf y}$ satisfying all agents and $\sum_{j=1}^m y_j <1$, for any possible instance with $m$ projects. Let $\delta=1-\sum_{j=1}^m y_j>0$, and let $(y_1+\frac{\delta}{m},y_2+\frac{\delta}{m},\ldots,y_m+\frac{\delta}{m})$ be the demand vector of some agent. Clearly, ${\bf y}$ does not satisfy the agent's demand.

To conclude this part, consider the case $\tau=m$ where a solution ${\bf x}$ satisfying all agents must verify $x_j \ge \max_{i \in N} \dell_j^i$ for all $j \in [m]$. Thus, the solution $x_j=\max_{i \in N} \dell_j^i$ for all $j \in [m]$ is optimal and the problem is in P. The solution $x_j=1$ for all $j \in [m]$ satisfies all agents, for any instance. Its total budget is $m$, which is optimal if we consider the instance where for every $j\in [m]$, there exists an agent who has a demand of 1 for project $j$, and 0 everywhere else. 

\section{Conclusion and Future Work}

We have considered the problem of existence and computation of a budget division that best satisfy a group of agents. In our setting, an agent is satisfied if she is locally satisfied for at least $\tau$ projects, and local satisfaction of a project $j$ is achieved if the part of the budget allocated to $j$ is not below the agent's demand. We provided a number of results concerning the satisfaction of all agents, or a fraction of them, with a single solution.  

Some questions are left open, such as determining the exact best fraction $\rho$ of agents that can be satisfied in any instance when $\tau=m/2$; we have seen 
that $\rho \in [ 1/2+1/2n,2/3+1/n]$ in this case. A related question is finding the largest value of $\tau$ such that a constant fraction of agents $\rho$ is satisfied. Indeed, since $\rho = \mathcal{O}(1/n)$ when $\tau \geq m-1$, there is a phase transition in the interval $[\frac{m}{2},m-1]$.  
Another open problem 
is the determination of the complexity of \AAS$(m/2)$. 

Other related problems can be posed: Given $\tau \in [m]$, how hard is the problem of maximizing the number of satisfied agents with a budget-feasible solution? One can also be interested in maximizing a social welfare function: Finding a budget-feasible solution which maximizes either the minimum number of times at an agent is locally satisfied (egalitarian approach), or the total number of pairs $(i,j)$ such that agent $i$ is locally satisfied for project $j$ (utilitarian approach). The egalitarian approach is open whereas the utilitarian approach can be solved in polynomial time (cf. Section \ref{sec:dp:util} in the appendix).     

\begin{comment}

The focus of this work is on four natural values of $\tau$, namely $\tau \in \{1,m/2,m-1,m\}$;  other values of $\tau$  
might also be of interest.    
\end{comment}

We have considered deterministic solutions but randomization may help if outputting a probability distribution over several budget feasible solutions is possible \cite{GoyalSSG23}. Then, what would be the best possible expected fraction of satisfied agents? 

In the present work, and as in  \cite{WagnerM23,EST23,CCP24,FS24}, each agent 
reports a demand vector. 
The question then arises of the cognitive difficulty of this task, 
in comparison with the cognitively simpler task of approving, or ranking,  projects, as it is considered in \cite{AACKLP23,MPS20,BGPSW24}.

Our final remark is related to the strategic aspect of demands. Since agents may misreport their demands in order to influence the outcome \cite{FPPV21}, 
strategyproof algorithms would be relevant.

\bigskip

\noindent {\bf Acknowledgements:}  Laurent Gourv{\`e}s is supported  by  Agence  Nationale  de  la  Recherche  (ANR),  project THEMIS ANR-20-CE23-0018. Aris Pagourtzis has been partially supported by project MIS 5154714 of the National Recovery and Resilience Plan Greece 2.0 funded by the European Union under the NextGenerationEU Program. Michael Lampis is supported by ANR project S-EX-AP-PE-AL ANR-21-CE48-0022.
Nikolaos Melissinos is supported by the European Union under the project Robotics and advanced industrial production (reg. no. $\text{CZ}.02.01.01/00/22\_008/0004590$).

\bibliographystyle{plain} 
\bibliography{sample}

\begin{thebibliography}{10}

\bibitem{AACKLP23}
St{\'{e}}phane Airiau, Haris Aziz, Ioannis Caragiannis, Justin Kruger,
  J{\'{e}}r{\^{o}}me Lang, and Dominik Peters.
\newblock Portioning using ordinal preferences: Fairness and efficiency.
\newblock {\em Artif. Intell.}, 314:103809, 2023.

\bibitem{ABM20}
Haris Aziz, Anna Bogomolnaia, and Herv{\'{e}} Moulin.
\newblock Fair mixing: The case of dichotomous preferences.
\newblock {\em {ACM} Trans. Economics and Comput.}, 8(4):18:1--18:27, 2020.

\bibitem{AS21}
Haris Aziz and Nisarg Shah.
\newblock Participatory budgeting: Models and approaches.
\newblock In Tamás Rudas and Gábor Péli, editors, {\em Pathways between
  Social Science and Computational Social Science: Theories, Methods and
  Interpretations}, pages 215--236. Springer, Switzerland, 2021.

\bibitem{BogomolnaiaMS05}
Anna Bogomolnaia, Herv{\'{e}} Moulin, and Richard Stong.
\newblock Collective choice under dichotomous preferences.
\newblock {\em Journal of Economic Theory}, 122(2):165--184, 2005.

\bibitem{BRANDL2022102585}
Florian Brandl, Felix Brandt, Matthias Greger, Dominik Peters, Christian
  Stricker, and Warut Suksompong.
\newblock Funding public projects: A case for the nash product rule.
\newblock {\em Journal of Mathematical Economics}, 99:102585, 2022.

\bibitem{BCELP2016}
Felix Brandt, Vincent Conitzer, Ulle Endriss, J{\'{e}}r{\^{o}}me Lang, and
  Ariel~D. Procaccia, editors.
\newblock {\em Handbook of Computational Social Choice}.
\newblock Cambridge University Press, Cambridge, 2016.

\bibitem{BGPSW24}
Markus Brill, Paul G{\"{o}}lz, Dominik Peters, Ulrike Schmidt{-}Kraepelin, and
  Kai Wilker.
\newblock Approval-based apportionment.
\newblock {\em Math. Program.}, 203(1):77--105, 2024.

\bibitem{Cabannes2004}
Yves Cabannes.
\newblock Participatory budgeting: a significant contribution to participatory
  democracy.
\newblock {\em Environment and Urbanization}, 16(1):27–46, 2004.

\bibitem{CCP24}
Ioannis Caragiannis, George Christodoulou, and Nicos Protopapas.
\newblock Truthful aggregation of budget proposals with proportionality
  guarantees.
\newblock {\em Artificial Intelligence}, 335:104178, 2024.

\bibitem{CGL21}
Pierre Cardi, Laurent Gourv{\`{e}}s, and Julien Lesca.
\newblock Worst-case bounds for spending a common budget.
\newblock In Frank Dignum, Alessio Lomuscio, Ulle Endriss, and Ann Now{\'{e}},
  editors, {\em {AAMAS} '21: 20th International Conference on Autonomous Agents
  and Multiagent Systems, Virtual Event, United Kingdom, May 3-7, 2021}, pages
  288--296, New York, NY, 2021. {ACM}.

\bibitem{CGL22}
Pierre Cardi, Laurent Gourv{\`{e}}s, and Julien Lesca.
\newblock On fair and efficient solutions for budget apportionment.
\newblock In Piotr Faliszewski, Viviana Mascardi, Catherine Pelachaud, and
  Matthew~E. Taylor, editors, {\em 21st International Conference on Autonomous
  Agents and Multiagent Systems, {AAMAS} 2022, Auckland, New Zealand, May 9-13,
  2022}, pages 1560--1562, Richland, South Carolina, 2022. International
  Foundation for Autonomous Agents and Multiagent Systems {(IFAAMAS)}.

\bibitem{de2022international}
Michiel~S De~Vries, Juraj Nemec, and David {\v{S}}pa{\v{c}}ek.
\newblock International trends in participatory budgeting.
\newblock Palgrave Macmillan, Springer, 2022.

\bibitem{EST23}
Edith Elkind, Warut Suksompong, and Nicholas Teh.
\newblock Settling the score: Portioning with cardinal preferences.
\newblock In Kobi Gal, Ann Now{\'{e}}, Grzegorz~J. Nalepa, Roy Fairstein, and
  Roxana Radulescu, editors, {\em {ECAI} 2023 - 26th European Conference on
  Artificial Intelligence, September 30 - October 4, 2023, Krak{\'{o}}w, Poland
  - Including 12th Conference on Prestigious Applications of Intelligent
  Systems {(PAIS} 2023)}, volume 372 of {\em Frontiers in Artificial
  Intelligence and Applications}, pages 621--628, Amsterdam, The Netherlands,
  2023. {IOS} Press.

\bibitem{UE17}
Ulle Endriss, editor.
\newblock {\em Trends in Computational Social Choice}.
\newblock Lulu.com, Raleigh, 2017.

\bibitem{FPPV21}
Rupert Freeman, David~M. Pennock, Dominik Peters, and Jennifer~Wortman Vaughan.
\newblock Truthful aggregation of budget proposals.
\newblock {\em Journal of Economic Theory}, 193:105234, 2021.

\bibitem{FS24}
Rupert Freeman and Ulrike Schmidt{-}Kraepelin.
\newblock Project-fair and truthful mechanisms for budget aggregation.
\newblock In Michael~J. Wooldridge, Jennifer~G. Dy, and Sriraam Natarajan,
  editors, {\em Thirty-Eighth {AAAI} Conference on Artificial Intelligence,
  {AAAI} 2024, Thirty-Sixth Conference on Innovative Applications of Artificial
  Intelligence, {IAAI} 2024, Fourteenth Symposium on Educational Advances in
  Artificial Intelligence, {EAAI} 2014, February 20-27, 2024, Vancouver,
  Canada}, pages 9704--9712, Washington DC, USA, 2024. {AAAI} Press.

\bibitem{GoyalSSG23}
Mohak Goyal, Sukolsak Sakshuwong, Sahasrajit Sarmasarkar, and Ashish Goel.
\newblock Low sample complexity participatory budgeting.
\newblock In Kousha Etessami, Uriel Feige, and Gabriele Puppis, editors, {\em
  50th International Colloquium on Automata, Languages, and Programming,
  {ICALP} 2023, July 10-14, 2023, Paderborn, Germany}, volume 261 of {\em
  LIPIcs}, pages 70:1--70:20, Germany, 2023. Schloss Dagstuhl - Leibniz-Zentrum
  f{\"{u}}r Informatik.

\bibitem{MPS20}
Marcin Michorzewski, Dominik Peters, and Piotr Skowron.
\newblock Price of fairness in budget division and probabilistic social choice.
\newblock In {\em The Thirty-Fourth {AAAI} Conference on Artificial
  Intelligence, {AAAI} 2020, The Thirty-Second Innovative Applications of
  Artificial Intelligence Conference, {IAAI} 2020, The Tenth {AAAI} Symposium
  on Educational Advances in Artificial Intelligence, {EAAI} 2020, New York,
  NY, USA, February 7-12, 2020}, pages 2184--2191, Washington, DC, 2020. {AAAI}
  Press.

\bibitem{moulin1980strategy}
Herv{\'e} Moulin.
\newblock On strategy-proofness and single peakedness.
\newblock {\em Public Choice}, 35(4):437--455, 1980.

\bibitem{JR2024}
J{\"{o}}rg Rothe, editor.
\newblock {\em Economics and Computation, An Introduction to Algorithmic Game
  Theory, Computational Social Choice, and Fair Division}.
\newblock Springer texts in business and economics. Springer Cham, Cham, 2024.
\newblock Second Edition.

\bibitem{WagnerM23}
Jonathan Wagner and Reshef Meir.
\newblock Strategy-proof budgeting via a vcg-like mechanism.
\newblock In Argyrios Deligkas and Aris Filos{-}Ratsikas, editors, {\em
  Algorithmic Game Theory - 16th International Symposium, {SAGT} 2023, Egham,
  UK, September 4-7, 2023, Proceedings}, volume 14238 of {\em Lecture Notes in
  Computer Science}, pages 401--418, Cham, 2023. Springer.

\end{thebibliography}

%\newpage 
\section{Appendix}

\subsection{Impossibility of Satisfying all Agents} \label{sec:sm:1}

\begin{observation} \label{imposs}
It is impossible to satisfy all the agents of 
Instance \ref{instance_nocover} 
when $\tau=2$.
\end{observation}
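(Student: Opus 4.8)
The plan is to assume for contradiction that some budget-feasible $\mathbf{x}=(x_1,x_2,x_3)$ with $x_1+x_2+x_3\le 1$ satisfies all four agents, and to derive a contradiction by a short case analysis driven by the two ``$0.5$'' agents. First I would observe that Alice $(0.5,0.5,0)$ is locally satisfied on project $3$ for free (her demand there is $0$), so to reach $\tau=2$ she must have $x_1\ge 0.5$ or $x_2\ge 0.5$; symmetrically Bob $(0,0.5,0.5)$ is satisfied on project $1$ for free and needs $x_2\ge 0.5$ or $x_3\ge 0.5$. The whole argument then splits on whether $x_2\ge 0.5$.

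In the case $x_2\ge 0.5$, budget feasibility leaves $x_1+x_3\le 0.5$, hence $x_1\le 0.5<0.6$ and $x_3\le 0.5<0.6$. Here Carl $(0.6,0.1,0.3)$ and Diana $(0.3,0.1,0.6)$ are both locally satisfied on project $2$ (their demand there is $0.1\le x_2$), so each needs exactly one further local satisfaction. Since $x_1<0.6$, Carl is forced to use project $3$, i.e. $x_3\ge 0.3$; since $x_3<0.6$, Diana is forced to use project $1$, i.e. $x_1\ge 0.3$. But then $x_1+x_3\ge 0.6>0.5$, contradicting $x_1+x_3\le 0.5$.

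In the complementary case $x_2<0.5$, Alice now forces $x_1\ge 0.5$ and Bob forces $x_3\ge 0.5$, so $x_1+x_3\ge 1$; together with $x_1+x_2+x_3\le 1$ this pins down $x_1=x_3=0.5$ and $x_2=0$. For this unique candidate, Carl $(0.6,0.1,0.3)$ is locally satisfied only on project $3$ (as $0.5\ge 0.3$) and fails on projects $1$ (as $0.5<0.6$) and $2$ (as $0<0.1$), so Carl is not satisfied --- the final contradiction. As both cases are impossible, no budget-feasible solution satisfies all four agents.

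This is a finite, elementary argument, so there is no deep obstacle; the only care needed is to exploit the budget constraint $x_1+x_2+x_3\le 1$ tightly at exactly the two points where it bites (the overflow $x_1+x_3\ge 0.6$ in the first case and the forced equality $x_1=x_3=0.5$ in the second). Invoking Observation \ref{obs2} to restrict each $x_j$ to the finitely many demanded values would give an alternative brute-force route, but the case split above is cleaner and avoids enumerating candidates.
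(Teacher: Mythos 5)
Your proof is correct and takes essentially the same route as the paper's: both arguments note that Alice and Bob force some coordinate to be at least $0.5$, handle the case $x_2 \ge 0.5$ by showing Carl forces $x_3 \ge 0.3$ and Diana forces $x_1 \ge 0.3$ (overflowing the budget), and dispose of the remaining situation via the unique candidate $x_1=x_3=0.5$, $x_2=0$, which fails to satisfy Carl. Your dichotomy on $x_2 \ge 0.5$ versus $x_2 < 0.5$ simply merges the paper's two symmetric cases ($x_1 \ge 0.5$ and $x_3 \ge 0.5$) into one, a slightly tidier organization of the same argument.
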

\begin{proof}

In a hypothetical solution $(x_1,x_2,x_3)$ that satisfies every agent of Instance (\ref{instance_nocover}), there must be one coordinate with value at least $0.5$ because of the first two agents. 

If $x_2 \ge 0.5$, then project 2 is satisfied for everybody. In order to satisfy the third agent, the only possibility is to set $x_3 \ge 0.3$ because $x_1 \ge 0.6$ is incompatible with $x_2 \ge 0.5$ and $x_1+x_2+x_3=1$. Since $x_3 \ge 0.6$ is also incompatible with $x_2 \ge 0.5$ and $x_1+x_2+x_3=1$, one would need to set $x_1 \ge 0.3$ in order to satisfy the fourth agent. An impossibility is reached. 

Now suppose $x_1 \ge 0.5$. We need to set $x_2 \ge 0.5$ or $x_3 \ge 0.5$ in order to satisfy agent 2. Due to $x_1+x_2+x_3=1$, the two cases are $x_1 =x_2= 0.5$ and $x_1=x_3= 0.5$, but $x_2=0.5$ has already been considered in the previous paragraph. Thus, we retain $x_1=x_3= 0.5$ and $x_2=0$, which does not satisfy agents 3 and 4.           

Due to the symmetry of Instance (\ref{instance_nocover}), the case $x_3 \ge 0.5$ is the same as the previous one where $x_1 \ge 0.5$. \end{proof}

%\subsection{Tightness of the Dictator Rule} \label{tighness_sm}

\subsection{Proof of Theorem \ref{thm3}}
\label{sec:proof:lem:pair}
The high-level strategy of the proof is given in the sketched proof of Theorem \ref{thm3}.

\begin{proof} The agents are named $a$, $b$, and $c$, and their demands are as follows.
\begin{equation} \label{gmatrix} 
\left( \begin{array}{ccccc}
a_1 & a_2 & a_3 &\cdots&a_m\\
b_1 & b_2 & b_3 &\cdots&b_m\\
c_1 & c_2 & c_3 &\cdots&c_m  
\end{array} \right)
\end{equation}
%where we suppose w.l.o.g. that $a_1$ is the largest value of the matrix. 
We assume that  
\begin{equation} \label{constantsum3}
\sum_{j=1}^m a_j=\sum_{j=1}^m b_j =\sum_{j=1}^m c_j=1,
\end{equation}
and if it is not the case, then without loss of generality the values can be arbitrarily increased until (\ref{constantsum3}) holds. We also suppose that the number of projects $m$ is equal to $2k+1$ because of the following observation. 

\begin{observation} When $\tau=m/2$, if an algorithm can output a solution 
%${\bf x}$ 
which satisfies $\rho \cdot n$ agents for every possible instance with $n$ agents and $m=2k+1$ projects, then the same algorithm can output a solution which satisfies $\rho \cdot n$ agents for every possible instance with $n$ agents and $2k+2$ projects.   
\end{observation}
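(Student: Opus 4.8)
The plan is to reduce the even case directly to the odd case by stripping one project and padding the returned solution with a zero coordinate. The crucial point that makes this work is that the satisfaction threshold is unchanged across the two cases: for $m=2k+1$ we have $\tau=\lceil m/2\rceil=k+1$, while for $m=2k+2$ we have $\tau=m/2=k+1$. Thus in both instances an agent is satisfied precisely when she is locally satisfied for at least $k+1$ projects, and this common value $k+1$ is what the whole argument rests on.

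First I would take an arbitrary instance $I$ with $n$ agents and $m=2k+2$ projects and form an instance $I'$ with the same $n$ agents but only $2k+1$ projects, obtained by deleting the last project, i.e.\ restricting every demand vector to its first $2k+1$ coordinates. Since $\sum_{j=1}^{2k+2}\dell^i_j\le 1$ implies $\sum_{j=1}^{2k+1}\dell^i_j\le 1$, the instance $I'$ satisfies (\ref{mass_l}) and is a legitimate input. If one insists on the normalization that the demands sum to exactly $1$, the leftover mass can be added to any single coordinate of each agent, which only raises demands; a solution satisfying $\rho n$ agents of this harder normalized instance then also satisfies at least that many agents of $I'$, so the reduction is unaffected.

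Next I would run the assumed odd-case algorithm on $I'$ to obtain a budget-feasible $\mathbf{x}'=(x'_1,\dots,x'_{2k+1})$ with $\sum_{j=1}^{2k+1}x'_j\le 1$ that satisfies at least $\rho n$ agents of $I'$, and then set $\mathbf{x}=(x'_1,\dots,x'_{2k+1},0)$. This $\mathbf{x}$ is budget-feasible for $I$ because $\sum_{j=1}^{2k+2}x_j=\sum_{j=1}^{2k+1}x'_j\le 1$. Moreover, any agent satisfied by $\mathbf{x}'$ in $I'$ is locally satisfied for at least $k+1$ of the first $2k+1$ projects; appending the zero coordinate changes none of those local comparisons, so the same agent is locally satisfied for at least $k+1=\tau$ projects of $I$ and is therefore satisfied by $\mathbf{x}$. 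Hence $\mathbf{x}$ satisfies at least $\rho n$ agents of $I$, which is exactly what the observation asserts, and the transformation is clearly computable in time linear in the input.

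I do not expect a genuine difficulty here; the only step that requires care is confirming the arithmetic $\lceil (2k+1)/2\rceil=(2k+2)/2=k+1$, since the entire reduction hinges on the threshold being identical before and after the deletion. Were the two thresholds to differ, padding with a zero coordinate would no longer preserve satisfaction and the argument would collapse, so this equality is the one point I would state explicitly rather than leave implicit.
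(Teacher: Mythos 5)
Your proposal is correct and follows essentially the same route as the paper: delete one project from the $(2k+2)$-project instance, run the odd-case algorithm, and observe that the satisfaction threshold is $k+1$ in both instances, so any agent satisfied in the reduced instance remains satisfied after extending the solution (with a zero coordinate) back to the original one. Your additional remarks on budget feasibility of the padded solution and on the exact-normalization variant are points the paper leaves implicit, but they do not change the argument.
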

\begin{proof} Take the instance $I$ with $m=2k+2$ projects and remove one project arbitrarily. We get a feasible instance $I'$ (i.e., (\ref{mass_l}) is satisfied) for which the algorithm outputs a solution ${\bf x}$ which satisfies $\rho \cdot n$ agents. An agent $i$ is satisfied by ${\bf x}$ in $I'$ if she is locally satisfied for at least $k+1$ projects. Thus, agent $i$ is also satisfied by ${\bf x}$ in $I$ because she also needs to be locally satisfied for at least $k+1$ projects. \end{proof}

Using Lemma \ref{lem_pair} we know that $a$ or $b$ $\tau$-covers\footnote{As mentioned in Definition \ref{def_cov},  ${\bf x}$ $\tau$-covers ${\bf y}$ when $x_j \ge y_j$ holds for at least $\tau$ distinct coordinates $j$.}
 both $a$ and $b$. Same remark for the pairs $(a,c)$ and $(b,c)$. %Moreover, every agent covers herself. 
Thus, if one agent in $\{a,b,c\}$ $\tau$-covers all the agents, then we are done. Otherwise, we can suppose that $a$ $\tau$-covers $a$ and $b$ but not $c$, $b$ $\tau$-covers $b$ and $c$ but not $a$, and $c$ $\tau$-covers $a$ and $c$ but not $b$. This assumption is made w.l.o.g. because the names of the agents can be swapped, if necessary.

We are going to produce three solutions by reshuffling matrix (\ref{gmatrix}) in such a way that every line of the new matrix is a solution which  
$\tau$-covers the 3 agents, and at least one of these solutions is feasible (i.e., the sum of its elements does not exceed 1). Concretely, we are going to swap the elements of every column of (\ref{gmatrix}). It is not difficult to see that these permutations do not prevent the new matrix to have at least one feasible line. Indeed, (\ref{constantsum3}) guarantees that the total sum of the elements is 3, so it cannot be the case that the sum of every line exceeds 1.

Consider the set of indices $S_{ab} \subset [m]$ where $a_j>b_j$. Similarly, $S_{ca}$ contains the indices $j \in [m]$ such that $c_j>a_j$, and $S_{bc}$ contains the indices $j \in [m]$ such that $b_j>c_j$. See Figure \ref{fig1} for an illustration. 
Note that indices $j$ of $[m]$ which are out of $S_{ab}\cup S_{bc}\cup S_{ca}$ satisfy $a_j\le b_j \le c_j \le a_j$, i.e., $a_j=b_j=c_j$. In addition, no index $j$ can be in $S_{ab}\cap S_{bc}\cap S_{ca}$ because it would lead to the contradiction $a_j> b_j> c_j > a_j$.

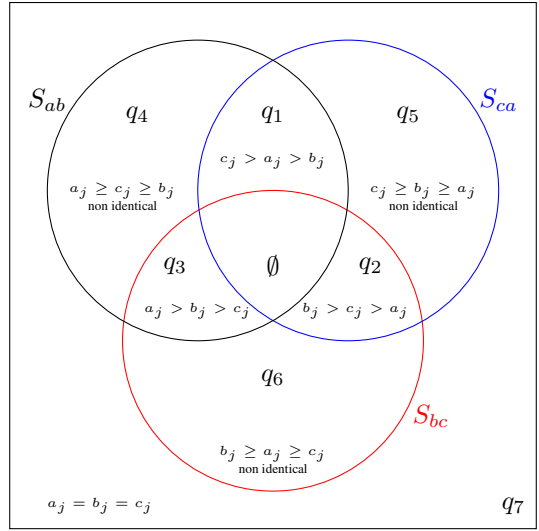
\begin{figure}
\begin{center}
\begin{tikzpicture}[scale=1]

\draw  (-3.5,-2.5) rectangle (3.5,4.5);
\draw[red] (0,0) circle (2) ;
\draw[blue] (1,2) circle (2) ;
\draw (-1,2) circle (2) ;

\draw (-2.3,-2.2) node{{\tiny $a_j=b_j=c_j$}} ;
\draw (-3,3.2) node{$S_{ab}$} ;
\draw[blue] (3,3.2) node{$S_{ca}$} ;
\draw[red] (2.1,-1) node{$S_{bc}$} ;

\draw (-1,0.4) node{{\tiny $a_j>b_j>c_j$}} ;
\draw (1.1,0.4) node{{\tiny $b_j>c_j>a_j$}} ;
\draw (0,2.4) node{{\tiny $c_j>a_j>b_j$}} ;

\draw (-2,2) node{{\tiny $a_j\ge c_j \ge b_j$}} ;
\draw (-2,1.8) node{{\tiny non identical}} ;
\draw (2,2) node{{\tiny $c_j \ge b_j \ge a_j$}} ;
\draw (2,1.8) node{{\tiny non identical}} ;
\draw (0,-1.5) node{{\tiny $b_j \ge a_j \ge c_j$}} ;
\draw (0,-1.7) node{{\tiny non identical}} ;

\draw (0,1) node{$\emptyset$} ;

\draw (0,3) node{$q_1$} ;
\draw (1.3,1) node{$q_2$} ;
\draw (-1.3,1) node{$q_3$} ;
\draw (-1.8,3) node{$q_4$} ;
\draw (1.8,3) node{$q_5$} ;
\draw (0,-0.5) node{$q_6$} ;
\draw (3.2,-2.2) node{$q_7$} ;

\end{tikzpicture}

\end{center}

\caption{\label{fig1} Illustration of Theorem \ref{thm3}}
\end{figure}

Let $q_1$ be the cardinality of the set $Q_1:=S_{ab} \cap S_{ca} \setminus S_{bc}$. We have $c_j>a_j>b_j$ for every index $j \in Q_1$.  
Let $q_2$ be the cardinality of the set $Q_2:=S_{ca} \cap S_{bc} \setminus S_{ab}$. We have $b_j>c_j>a_j$ for every index $j \in Q_2$. 
Let $q_3$ be the cardinality of the set $Q_3:=S_{ab} \cap S_{bc} \setminus S_{ca}$.  We have $a_j>b_j>c_j$ for every index $j \in Q_3$. 
Let $q_4$ be the cardinality of the set $Q_4:=S_{ab} \setminus (S_{ca} \cup S_{bc})$. 
We have $a_j \ge c_j \ge b_j$ with at least one strict inequality for every index $j \in Q_4$ (i.e., the case $a_j=c_j=b_j$ is excluded, so {\em non-identical} is written on Figure \ref{fig1} to exclude this case). 
Let $q_5$  be the cardinality of the set $Q_5:=S_{ca} \setminus (S_{ab} \cup S_{bc})$. We have $c_j \ge b_j \ge a_j$ with at least one strict inequality for every index $j \in Q_5$ (the case $a_j=c_j=b_j$ is excluded). 
Let $q_6$ be the cardinality of the set $Q_6:=S_{bc} \setminus (S_{ab} \cup S_{ca})$. We have $b_j \ge a_j \ge c_j$ with at least one strict inequality for every index $j \in Q_5$ (the case $a_j=c_j=b_j$ is excluded).

As previously mentioned, the number of indices in $S_{bc} \cap S_{ab} \cap S_{ca}$ is zero. The set of remaining indices (i.e., out of $S_{bc} \cup S_{ab} \cup S_{ca}$) is denoted by $Q_7$ and its cardinality is denoted by $q_7$.           

%Note that $S_{ab} \setminus (S_{bc} \cup S_{ca})$ contains the indices $j$ where $a_j \ge c_j \ge b_j$ holds but the case $a_j = c_j = b_j$ is excluded (we write {\em non-identical} on Figure \ref{fig1} to exclude this case). Similarly,  $S_{ca} \setminus (S_{ab} \cup S_{bc})$ contains the indices $j$ where $c_j \ge b_j \ge a_j$ holds but the case $a_j = c_j = b_j$ is excluded. Finally, $S_{bc} \setminus (S_{ab} \cup S_{ca})$ contains the indices $j$ where $b_j \ge a_j \ge c_j$ holds but the case $a_j = c_j = b_j$ is excluded. 

Since we assumed that $a$ does not $\tau$-cover $c$, we know that the number of projects where $a_j \ge c_j$ holds is at most $k$. In other words, $q_3+q_4+q_6+q_7 \le k$. Similarly, $q_1+q_4+q_5+q_7 \le k$ and $q_2+q_5+q_6+q_7 \le k$ hold because $c$ does not $\tau$-cover $b$, and $b$ does not $\tau$-cover $a$, respectively. %Adding 1 on both sides of these inequalities gives.
\begin{eqnarray} \label{ineq01}
q_3+q_4+q_6+q_7 &\le& k\\ \label{ineq02}
q_1+q_4+q_5+q_7 &\le& k\\ \label{ineq03}
q_2+q_5+q_6+q_7 &\le& k
\end{eqnarray}

Since we assumed that $a$ $\tau$-covers $b$, we know that there are least $k+1$ indices such that $a_j \ge b_j$. 
In other words, $q_1+q_3+q_4+q_7 \ge k+1$. Similarly,
$c$ $\tau$-covers $a$ gives $q_1+q_2+q_5+q_7 \ge k+1$, and 
$b$ $\tau$-covers $c$ gives $q_2+q_3+q_6+q_7 \ge k+1$.
\begin{eqnarray} \label{ineq04}
q_1+q_3+q_4+q_7-1 &\ge& k\\ \label{ineq05}
q_1+q_2+q_5+q_7-1&\ge& k\\ \label{ineq06}
q_2+q_3+q_6+q_7-1 &\ge& k
\end{eqnarray}
Combining Inequalities $(\ref{ineq04})$ and $(\ref{ineq01})$ gives 
$q_1+q_3+q_4+q_7-1 \ge k \ge q_3+q_4+q_6+q_7 \Leftrightarrow 
q_1-1\ge q_6$. Since $q_6 \ge 0$ ($q_6$ is the cardinal of a set), we get that $q_1 \ge 1$. Therefore, the instance must contain at least one project index, say 1 w.l.o.g., such that 
\begin{equation} \label{project1}
c_1^g > a_1^r > b_1^y.
\end{equation}
The letters $r$, $g$ and $y$ in the exponent are colors (namely, red, green, and yellow) which are ignored at this stage of the proof; they will be explained and used later on. 
Inequalities $(\ref{ineq05})$ and $(\ref{ineq02})$ give $q_2 -1 \ge q_4$ which indicates the presence of at least one project index, say 2, such that 
\begin{equation} \label{project2}
b_2^r > c_2^y > a_2^g.
\end{equation}
Similarly, Inequalities $(\ref{ineq06})$ and $(\ref{ineq03})$ give $q_3 -1 \ge q_5$ which indicates the presence of at least one project index, say 3, such that 
\begin{equation} \label{project3}
a_3^y > b_3^g > c_3^r.
\end{equation}
Apart from indices $\{1,2,3\}$ described in (\ref{project1}),(\ref{project2}) and (\ref{project3}),  exactly $m-3=2k+1-3=2(k-1)$  project indices remain. It is an even number so we can create pairs between the remaining indices. Since $q_1-1 \ge q_6$, each project index in $Q_6$ can be paired with another project index of $Q_1$ (and different from index 1). This gives us $q_6$ pairs $(j_1,j_1')$ such that 
\begin{equation} \label{project16}
c^r_{j_1} > a^y_{j_1} > b^g_{j_1} \text{ and }
b^r_{{j_1}'} \ge a^g_{{j_1}'} \ge c^y_{{j_1}'}.
\end{equation}
Similarly, $q_2-1 \ge q_4$ says that each project index in $Q_4$ can be paired with another project index of $Q_2$ (and different from index 2). This gives us $q_2$ pairs $({j_2},{j_2}')$ such that 
\begin{equation} %\label{project3}
b^r_{j_2} > c^y_{j_2} > a^g_{j_2} \text{ and }
a^r_{{j_2}'} \ge c^g_{{j_2}'} \ge b^y_{{j_2}'}.
\end{equation}
Finally, $q_3-1 \ge q_5$ says that each project index in $Q_5$ can be paired with another project index of $Q_3$ (and different from index 3). This gives us $q_3$ pairs $({j_3} ,{j_3} ')$ such that 
\begin{equation} %\label{project3}
a^r_{j_3} > b^y_{j_3}  > c^g_{j_3} \text{ and }
c^r_{{j_3}'} \ge b^g_{{j_3}'} \ge a^y_{{j_3}'}.
\end{equation}
We can continue the pairing of indices 
but at this stage every element of $Q_4\cup Q_5 \cup Q_6$ is matched with an element of $Q_1\cup Q_2 \cup Q_3$.  Only some elements of $Q_1\cup Q_2 \cup Q_3 \cup Q_7$ possibly remain, and the total number of remaining 
%(i.e., not considered thus far) 
indices is even. 

%Let us Here we talk about the matching of the remaining element}

Since $q_1-1 \ge q_6$, there exists a non-negative integer $\delta_1$ satisfying $q_1=\delta_1+q_6+1$ where $\delta_1$ is the number of remaining indices of $Q_1$. Indeed, we have considered one element of $Q_1$ at $(\ref{project1})$, and $q_6$ others at $(\ref{project16})$. Similarly, use $q_2-1 \ge q_4$ and $q_3-1 \ge q_5$ to find two non-negative integers $\delta_2$ and $\delta_3$ satisfying $q_2=\delta_2+q_4+1$ and $q_3=\delta_3+q_5+1$, respectively. The meaning of $\delta_2$ and $\delta_3$ is the same as for $\delta_1$, but they are related to $Q_2$ and $Q_3$, respectively.

%Let us suppose w.l.o.g. that $\delta_2 \ge \delta_3 \ge \delta_1$. 
Inequality (\ref{ineq03}) can be rewritten as 
$q_2+q_5+q_6+q_7 \le ((2k+1)-1)/2$. Since $m=2k+1=q_1+q_2+q_3+q_4+q_5+q_6+q_7$, the previous inequality becomes 
%\begin{eqnarray} \nonumber
%                  &&q_2+q_5+q_6+q_7 \le (q_1+q_2+q_3+q_4+q_5+q_6+q_7-1)/2\\ \nonumber
%&\Leftrightarrow&  \delta_2+q_4+1+q_5+q_6+q_7  \le  (\delta_1+2q_6+\delta_2+2q_4+\delta_3+2q_5+q_7+2)/2\\ \nonumber
%&\Leftrightarrow&  \delta_2+q_7  \le  \delta_1+\delta_3\\ \label{monlabel1}
%&\Rightarrow&  \delta_2  \le  \delta_1+\delta_3
%\end{eqnarray}
$q_2+q_5+q_6+q_7 \le (q_1+q_2+q_3+q_4+q_5+q_6+q_7-1)/2 
\Leftrightarrow  \delta_2+q_4+1+q_5+q_6+q_7  \le  (\delta_1+2q_6+\delta_2+2q_4+\delta_3+2q_5+q_7+2)/2  
\Leftrightarrow  \delta_2+q_7  \le  \delta_1+\delta_3$ which gives
\begin{equation}
\label{monlabel1}
\delta_2  \le  \delta_1+\delta_3
\end{equation}
where $q_1=\delta_1+q_6+1$, $q_2=\delta_2+q_4+1$ and $q_3=\delta_3+q_5+1$  have been used. With similar arguments, we can use Inequalities (\ref{ineq01}) and  (\ref{ineq02}) to get that $\delta_1 \le \delta_2+\delta_3$, and $\delta_3 \le \delta_1+\delta_2$.

Let us suppose that $\delta_2 \ge \delta_3 \ge \delta_1$ (the fact that this choice is made w.l.o.g. is explained afterwards). The paring of the remaining indices is done as depicted on Figure \ref{fig2}. 

\begin{figure}
\begin{center}
\begin{tikzpicture}[scale=0.78]

\draw (0,0) node{$\triangle$} ;
\draw (1,0) node{$\triangle$} ;
\draw (2,0) node{$\triangle$} ;
\draw (3,0) node{$\square$} ;
\draw (4,0) node{$\triangle$} ;
\draw (5,0) node{$\square$} ;
\draw (6,0) node{$\triangle$} ;
\draw (7,0) node{$\star$} ;
\draw (8,0) node{$\star$} ;
\draw (9,0) node{$\star$} ;
\draw (10,0) node{$\star$} ;
\draw (0,1) node{$\circ$} ;
\draw (1,1) node{$\circ$} ;
\draw (2,1) node{$\circ$} ;
\draw (3,1) node{$\circ$} ;
\draw (4,1) node{$\circ$} ;

\draw[dashed]  (-0.3,-0.3) rectangle (0.3,1.3);
\draw[dashed]  (0.7,-0.3) rectangle (1.3,1.3);
\draw[dashed]  (1.7,-0.3) rectangle (2.3,1.3);
\draw[dashed]  (2.7,-0.3) rectangle (3.3,1.3);
\draw[dashed]  (3.7,-0.3) rectangle (4.3,1.3);

\draw[dashed]  (4.7,-0.3) rectangle (6.3,0.3);
\draw[dashed]  (6.7,-0.3) rectangle (8.3,0.3);
\draw[dashed]  (8.7,-0.3) rectangle (10.3,0.3);

\end{tikzpicture}

\end{center}

\caption{\label{fig2} Remaining indices of $Q_2$, $Q_3$,  $Q_1$ and $Q_4$ are represented as circles, triangles, squares, and stars, respectively. The dashed rectangles give the proposed matching}
\end{figure}
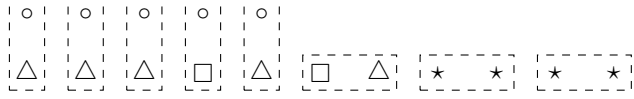

The first horizontal line of Figure \ref{fig2} consists of all the remaining indices of $Q_2$ (each such index is represented by a circle). The second horizontal line of Figure \ref{fig2} consists of $q_3-q_1$ remaining indices of $Q_3$ (represented by triangles), followed by $q_1$ alternations of remaining indices of $Q_1$ (represented by squares) and $Q_3$, followed by all the indices of $Q_7$ (represented by stars). Each index of the first line is matched with the index situated just below in the second line. Since $\delta_2 \le \delta_1+\delta_3$ (cf. (\ref{monlabel1})), we know that every element of the first line has a $Q_1$ or a $Q_3$ element just below. The matching is completed by pairing consecutive indices of the second line. Note that we never match together two elements of the same set, except for $Q_7$. In particular, two $Q_3$ elements of the second line cannot be matched together because $\delta_2 \ge \delta_3$.  Globally, a perfect matching is obtained because the number of remaining indices of $Q_1\cup Q_2 \cup Q_3 \cup Q_7$ is even. 

The choice of $\delta_2 \ge \delta_3 \ge \delta_1$ is w.l.o.g. because if we had $\delta_2 \ge \delta_1 \ge \delta_3$, then we could interchange the roles of $Q_3$ and $Q_1$ in the construction depicted on  Figure \ref{fig2}.
 Moreover, $\delta_1 \le \delta_2+\delta_3$ or $\delta_3 \le \delta_1+\delta_2$ could be used instead of $(\ref{monlabel1})$ if $\delta_1$ or $\delta_3$ was the largest element of $\{\delta_1,\delta_2,\delta_3\}$.

In the matching depicted on Figure \ref{fig2}, each index of $Q_2$   matched with an index of $Q_3$ gives a pair $(i_1,i'_1)$ such that
\begin{equation}b^r_{i_1} >c^y_{i_1}> a^g_{i_1}   \text{ and }
a^y_{i'_1} >b^g_{i'_1}> c^r_{i'_1}.
\end{equation}
Each index of $Q_2$ matched with an index of $Q_1$ gives a pair $(i_2,i'_2)$ such that
\begin{equation}b^r_{i_2} >c^g_{i_2}> a^y_{i_2}  \text{ and }
c^y_{i'_2} >a^r_{i'_2}> b^g_{i'_2}.
\end{equation}
Each  index of $Q_1$ matched with an index of $Q_3$ gives a pair $(i_3,i'_3)$ such that
\begin{equation}
c^r_{i_3} > a^g_{i_3}> b^y_{i_3} \text{ and }
a^y_{i'_3} >b^r_{i'_3}> c^g_{i'_3}.
\end{equation}
Each index of $Q_1$ matched with an index of $Q_7$ gives a pair $(i_4,i'_4)$ such that
\begin{equation}
c^r_{i_4} > a^g_{i_4}> b^y_{i_4} \text{ and }
a^y_{i'_4} =b^r_{i'_4}= c^g_{i'_4}.
\end{equation}
Each index of $Q_3$ matched with an index of $Q_7$ gives a pair $(i_5,i'_5)$ such that
\begin{equation}
a^r_{i_5} > b^g_{i_5}> c^y_{i_5} \text{ and }
a^y_{i'_5} =b^r_{i'_5}= c^g_{i'_5}.
\end{equation}
Finally, each index of $Q_7$ matched with another index of $Q_7$ gives a pair $(i_6,i'_6)$ such that
\begin{equation} \label{project77}
a^r_{i_6} = b^g_{i_6}= c^y_{i_6} \text{ and }
a^y_{i'_6} =b^r_{i'_6}= c^g_{i'_6}.
\end{equation}

So far all the indices of $[m]$ have been grouped: one triple (cf. indices $1$, $2$ and $3$ at (\ref{project1}), (\ref{project2}), and (\ref{project3})) and $k-1$ pairs. Each time a color has been given to every agent's declaration. For every project $j$, a bijection between  $\{a_j,b_j,c_j\}$ and $\{r,g,y\}$ has been provided (cf. the letters in the exponents) where $r$, $g$ and $y$ correspond to red, green and yellow, respectively. The matrix is reshuffled so that every line is monochromatic, and corresponds to a potential solution. 

We can observe that every line of the reshuffled matrix is a solution which locally satisfies at least $k+1$ projects of every agent. Let us give the details of this fact for the green one. Following (\ref{project1}), (\ref{project2}), and (\ref{project3}),  the first 3 values of the green solution are $c_1$, $a_2$, and $b_3$. Thus, the green solution locally satisfies agents $a$, $b$ and $c$ for project 1. It locally satisfies agent $a$ for project 2, and agents $b$ and $c$ for project 3. Therefore, every agent in $\{a,b,c\}$ is locally satisfied by the green solution for exactly 2 projects out of the first three. Concerning the $k-1$ pairs whose form is described in cases $(\ref{project16})$ to $(\ref{project77})$, one can verify that the green solution locally satisfies at least one project of each pair, for all agent in $\{a,b,c\}$. Therefore, the green solution locally satisfies $2+k-1=k+1$ projects for every agent, meaning that it $\tau$-covers every agent. 

The same verification can be done for the red solution and the yellow solution. Therefore, the reshuffled matrix is composed of three lines, each one being a solution which satisfies agents $a$, $b$, and $c$.   At least one of the three solutions (red, green and yellow) must be feasible (i.e., the total sum of its coordinates does not exceed 1) because of (\ref{constantsum3}).

A polynomial algorithm can be derived from the previous constructive proof of existence by
simply following the steps of the proof. There are seven subgroups of indices, and their pairings, together with the colors corresponding to the three different solutions, are given in the twelve (in)equalities (\ref{project1}) to (\ref{project77}). Eventually,  output one of the three solutions whose sum of coordinates does not exceed the budget of 1.\end{proof}

\section{Maximizing the Number of Times that Agents are Locally Satisfied} \label{sec:dp:util}

In this section, we consider the problem of maximizing the number of times that agents are locally satisfied. This utilitarian approach outputs a vector that maximizes $f_{util}: {\bf x} \mapsto \sum_{i \in N} \sum_{j=1}^m {\bf 1}(x_j \ge \dell^i _j)$. Here, ${\bf 1}(A)$ is an  
operator which is equal to $1$ when the assertion $A$ is true, 

\begin{theorem} A solution maximizing  $f_{util}$ can be built in polynomial time.  
\end{theorem}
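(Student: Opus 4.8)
The plan is to exploit two structural features of $f_{util}$: it is \emph{additively separable} across the $m$ projects, since the contribution of coordinate $x_j$ equals $g_j(x_j):=\sum_{i\in N}\mathbf{1}(x_j\ge \dell^i_j)$ and depends on no other coordinate, while the only constraint linking the coordinates is the budget $\sum_{j}x_j\le 1$. This makes the problem a resource-allocation (knapsack-type) problem that I would solve by dynamic programming. By Observation~\ref{obs2} I may assume each $x_j$ lies in $\{0\}\cup\{\dell^i_j:i\in N\}$, so only finitely many candidate values per coordinate need be considered.

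First I would precompute, for every project $j\in[m]$ and every target $b\in\{0,1,\dots,n\}$, the quantity $z_j(b)$ equal to the least value one must assign to $x_j$ in order to locally satisfy at least $b$ agents on project $j$; concretely $z_j(b)$ is the $b$-th smallest element of the multiset $\{\dell^i_j:i\in N\}$, with $z_j(0)=0$. Each $z_j$ is non-decreasing in $b$ and is obtained by sorting the column of demands. The key point is that once we decide to satisfy exactly $b$ agents at project $j$, the cheapest way to do so is forced, so the only remaining decision is \emph{how many} agents to satisfy at each project.

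Next I would set up the table $\cost(j,k)$, defined as the minimum total budget needed to accumulate at least $k$ locally-satisfied agent-project pairs using only projects $1,\dots,j$, with the convention that the value exceeds $1$ when this is infeasible. The base case is $\cost(1,k)=z_1(k)$, and for $j>1$ I would use
\[
\cost(j,k)=\min_{\substack{a+b=k\\ a\ge 0,\ b\in\{0,\dots,n\}}}\bigl(\cost(j-1,a)+z_j(b)\bigr),
\]
which is correct because an optimum for the first $j$ projects splits its satisfied pairs into $b$ pairs obtained at project $j$ (at cost $z_j(b)$) and $a=k-b$ pairs obtained among the first $j-1$ projects (at cost $\cost(j-1,a)$); these two costs are incurred on disjoint coordinates and therefore simply add.

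Finally, since $k$ ranges up to $mn$ and each entry is computed by a minimization over at most $n+1$ choices of $b$, the whole table is filled in time polynomial in $n$ and $m$. The optimum value of $f_{util}$ is the largest $k^*$ with $\cost(m,k^*)\le 1$, and the maximizing vector $\mathbf{x}$ is recovered by standard backtracking through the recurrence. The only point requiring care---and the step I would treat as the crux---is justifying the decoupling: that fixing the \emph{number} of agents satisfied per project, rather than the actual assigned values, loses nothing. This is exactly what the definition of $z_j(b)$ together with Observation~\ref{obs2} guarantees, and once it is established everything else is a routine knapsack-style dynamic program.
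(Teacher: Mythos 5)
Your proposal is correct and follows essentially the same approach as the paper's proof: the same quantities $z_j(b)$ and the same table $\cost(j,k)$ with the identical recurrence, base case, feasibility check $\cost(m,k^*)\le 1$, and backtracking step. The only cosmetic difference is that you describe $z_j(b)$ as the $b$-th smallest demand in column $j$ rather than as the smallest value covering at least $b$ agents, which is equivalent.
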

\begin{proof}
For every positive integer $p$, let  $[p]_0=\{0,1, \ldots,p\}$.  
Given $j \in [m]$ and an integer $k \in [jn]_0$, let $\cost(j,k) \in [0,1] \cup \{2\}$ be the minimum ``investment'' for locally satisfying at least $k$ projects of the $n$ agents, but only considering projects from $1$ to $j$. Here, the value $2$ simply means that the 
quantity  
%minimum investment for locally satisfying at least $k$ projects 
exceeds 1, i.e., it is infeasible, and 2 could be replaced by any different value strictly larger than 1. When $j$ is fixed, $\cost(j,k)$ is a non-decreasing function of $k$. 

For $(j,k) \in [m] \times [n]_0$, let $z_j(k)$ be the minimum value of the $j$-th coordinate of a solution ${\bf x}$ for locally satisfying at least $k$ agents of $N$ for project $j$. The quantity $z_j(k)$ is the 
smallest element $e$ of $\{0\} \cup \{\dell_j^i : i \in N\}$ such that $|\{i \in N : \dell^i_j \le e\}|\ge k$. It holds that  $0=z_j(0)\le z_j(1) \le \cdots \le z_j(n) \le 1$ for all $j \in [m]$. 

The basis of the dynamic programming is $\cost(1,k)=z_1(k)$ for $k=0$ to $n$. Then, use the following recursive formula for $j>1$ and $k=0$ to $jn$: 
\begin{equation*}
\cost(j,k)=\min_{\substack{(a,b) \text{ such that } a+b=k\\ \text{ and } a \ge 0 \text{ and } b\in [n]_0}} \left( \cost(j-1,a)+z_j(b) \right)
\end{equation*}

The justification of this formula is direct: when we are restricted to the $j$ first projects, 
the minimum cost for 
 locally satisfying at least $k$ agent-project pairs is done by locally satisfying at least $b$ agents for project $j$, plus at least $a$ agent-project pairs for projects from $1$ to $j-1$, where $0 \le b \le n$ and $0 \le a \le (j-1)n$.     

Computing an entry $\cost(j,k)$ requires at most $n+1$ elementary operations (one per choice of $b$). In total, $\cost$ has $(n+1)\frac{m(m+1)}{2}$ possible entries. Thus,  $(n+1)^2\frac{m(m+1)}{2}$ operations are necessary, which is polynomial in $n$ and $m$.

Once all the entries of $\cost$ are computed, the largest $k^*$ such that $\cost(m,k^*) \le 1$ corresponds to the maximum value taken by $f_{util}$. The associated solution, say ${\bf x}^*$, can be retrieved by standard backtracking techniques that we skip. 
\end{proof}

\end{document}